\documentclass[a4paper,english,pagebackref]{elsarticle}
\usepackage[utf8]{inputenc}
\usepackage{amsmath,amssymb,amsthm}
\usepackage[textsize=scriptsize]{todonotes}
\usepackage{paralist}
\usepackage{booktabs}
\usepackage{tabularx}
\usepackage{multirow}

\makeatletter
\def\NAT@spacechar{~}%
\makeatother
\usepackage{tikz}
\usetikzlibrary{decorations.pathmorphing}
\usetikzlibrary{decorations.pathreplacing}
\usetikzlibrary{shapes,automata}
\usetikzlibrary{arrows,calc}

\pgfdeclarelayer{background}
\pgfsetlayers{background,main}

\usepackage{color}
\usepackage{graphicx}
\definecolor{mygrey}{rgb}{0.9,0.9,0.9}
\definecolor{darkgreen}{RGB}{0,100,0}

\usepackage[pagebackref,pdfdisplaydoctitle,menucolor=orange!40!black,filecolor=magenta!40!black,urlcolor=blue!40!black,linkcolor=red!40!black,citecolor=green!40!black,colorlinks]{hyperref}
\hypersetup{pdftitle={The Complexity of Length- and Neighborhood-Constrained Paths}, pdfauthor={Max-Jonathan Luckow and Till Fluschnik}}

\usepackage[sort&compress,nameinlink,noabbrev,capitalize]{cleveref}
\usepackage[T1]{fontenc}

\usepackage{etoolbox}

\theoremstyle{plain}
\newtheorem{theorem}{Theorem}
\newtheorem{corollary}{Corollary}

\newtheorem{lemma}{Lemma}
\newtheorem{proposition}{Proposition}

\theoremstyle{definition}

\crefname{construction}{Construction}{Constructions}
\crefname{step}{Step}{Steps}
\crefname{observation}{Observation}{Observations}
\Crefname{observation}{Obs.}{Obs.}
\crefname{lemma}{Lemma}{Lemmas}
\Crefname{lemma}{Lem.}{Lem.}
\crefname{theorem}{Theorem}{Theorems}
\Crefname{theorem}{Thm.}{Thm.}
\crefname{proposition}{Proposition}{Propositions}
\Crefname{proposition}{Prop.}{Props.}
\crefname{remark}{Remark}{Remarks}
\crefname{prop}{Property}{Properties}
\crefname{corollary}{Corollary}{Corollaries}
\Crefname{corollary}{Cor.}{Cors.}
\creflabelformat{prop}{(#2#1#3)}

\newcommand{\yes}{\textsc{yes}}

\newcommand{\Oh}{\ensuremath{\mathcal{O}}}

\newcommand{\N}{\mathbb{N}}
\newcommand{\I}{\mathcal{I}}

\newcommand{\FPT}{\ensuremath{\operatorname{FPT}}}
\newcommand{\XP}{\ensuremath{\operatorname{XP}}}
\newcommand{\W}[1]{\ensuremath{\operatorname{W}[#1]}}
\newcommand{\poly}{\ensuremath{\operatorname{poly}}}
\newcommand{\NP}{\ensuremath{\operatorname{NP}}}
\newcommand{\coNP}{\ensuremath{\operatorname{coNP}}}
\newcommand{\unlessPK}{\ensuremath{\coNP\subseteq \NP/\poly}}

\newcommand{\decprob}[3]{
    \begin{center}
    \begin{minipage}{0.975\textwidth}
    \medskip\noindent
    \textsc{#1} \\
    \noindent\textbf{Input:} #2 \\
    \noindent\textbf{Question:} #3 \medskip\\
    \end{minipage}
    \end{center}
}

\newcommand{\appref}[1]{{\hyperref[proof:#1]{\appsymb}}}

\newcommand{\gprob}{$\{S,L\}\times\{S,U\}$~\textsc{Path}}
\newcommand{\gprobTsc}{\textsc{\gprob}}

\newcommand{\sspTsc}{\textsc{Short Secluded Path}}
\newcommand{\sspAcr}{\textsc{SSP}}

\newcommand{\lspTsc}{\textsc{Long Secluded Path}}
\newcommand{\lspAcr}{\textsc{LSP}}
\newcommand{\supTsc}{\textsc{Short Unsecluded Path}}
\newcommand{\supAcr}{\textsc{SUP}}
\newcommand{\lupTsc}{\textsc{Long Unsecluded Path}}
\newcommand{\lupAcr}{\textsc{LUP}}

\newcommand{\calI}{\mathcal{I}}

\newcommand{\LD}{\smallskip\noindent($\Leftarrow$)}
\newcommand{\RD}{\smallskip\noindent($\Rightarrow$)}

\newcommand{\constr}{\smallskip\noindent\emph{Construction}}
\newcommand{\corr}{\smallskip\noindent\emph{Correctness}}

\newcommand{\thetitle}{On the Computational Complexity of Length- and Neighborhood-Constrained Path Problems}

\begin{document}

\begin{frontmatter}
 
\title{\thetitle{}%
}

\author[a]{Max-Jonathan Luckow\fnref{fn1}}
\ead{mj.luckow@campus.tu-berlin.de}
\author[a]{Till~Fluschnik\corref{cor1}\fnref{fn2}}
\ead{till.fluschnik@tu-berlin.de}

\address[a]{Algorithmics and Computational Complexity, Faculty IV, TU~Berlin, Germany}

\cortext[cor1]{Corresponding author}
\fntext[fn1]{This work is based on the first author's bachelor thesis~\cite{Luckow17}.}
\fntext[fn2]{Supported by the DFG, project DAMM (NI~369/13-2) and TORE (NI~369/18).}

\begin{abstract}
Finding paths in graphs is a fundamental graph-theoretic task.
In this work, we we are concerned with finding a path with some constraints on its length and the number of vertices neighboring the path, that is, being outside of and incident with the path.
Herein, we consider short and long paths on the one side, and small and large neighborhoods on the other side---yielding four decision problems.
We show that all four problems are NP-complete, even in planar graphs with small maximum degree.
Moreover, we study all four variants when parameterized by a bound~$k$ on the length of the path, by a bound~$\ell$ on the size of neighborhood, and by~$k+\ell$.
\end{abstract}

\begin{keyword}
 secludedness\sep NP-complete\sep fixed-parameter tractability\sep W-hardness\sep problem kernelization
\end{keyword}

\end{frontmatter}
\thispagestyle{empty}

\section{Introduction}

Finding short or long paths (between two designated terminal vertices) in a graph are fundamental algorithmic problems. 
While a short path can be found in polynomial time by folklore results, computing a long path is an NP-complete problem.
In this work, we study short and long paths with small and large \emph{open} neighborhoods.
The open neighborhood of a path consists of all vertices that are not contained in the path but adjacent to at least one vertex in the path.
Formally, we study the following $2\times 2$ problems:

\decprob{\gprob}
	{An undirected graph~$G$ and two integers~$k\geq1$,~$\ell\geq0$.}
	{Is there a simple non-empty path~$P$ in~$G$ with open neighborhood $N:=|N_G(V(P))|$ such that
	
	\smallskip\noindent
	  \begin{tabularx}{0.575\textwidth}{rp{0.3\textwidth}}
	   \sspTsc~(\sspAcr): & $|V(P)|\leq k$ and $N\leq \ell$? \\
	   \lspTsc~(\lspAcr): & $|V(P)|\geq k$ and $N\leq \ell$? \\
	   \supTsc~(\supAcr): & $|V(P)|\leq k$ and $N\geq \ell$? \\
	   \lupTsc~(\lupAcr): & $|V(P)|\geq k$ and $N\geq \ell$? 
	  \end{tabularx}
	}
	
We also consider their so-called $st$-variants: 
Herein, two distinct vertices~$s$ and~$t$ are part of the input, and the question is whether there is an $st$-path fulfilling the respective conditions.
Note that herein~$k\geq 2$, as at least~$s$ and~$t$ must be contained in the path.
We indicate the $st$-variants by using $st$ as prefix.

\paragraph{Our Contributions}
Our results are summarized in~\cref{results-table}.
We prove \gprob{} (and their $st$-variants) to be \NP-complete even on planar graphs with maximum degree five (seven). 
In all but two cases, we settle the parameterized complexity of the four problems regarding their parameters number~$k$ of vertices in the path and size~$\ell$ of the open neighborhood of the path.  
Regarding the parameter~$k$, we have containment in~\XP{} for the ``short'' variants, and para-\NP-hardness (i.e.\ \NP-hardness for some constant parameter value) for the ``long'' variants.
However, the ``short'' variants are W-hard when parameterized by~$k$ and hence presumably fixed-parameter intractable.
The only cases in which we identified fixed-parameter tractability are for the ``short'' variants when parameterized by the combined parameter~$k+\ell$.
Complementing this, we prove that unless $\unlessPK$ no problem kernel of size polynomial in~$k+\ell$ exists, even in planar graphs with small maximum degree.
Regarding the parameter~$\ell$, we found in three of the four cases para-\NP-completeness.

\newcommand{\smtab}[1]{{\scriptsize#1}}
\renewcommand{\arraystretch}{1.25}
\begin{table*}[t]
  \centering
  \caption
  {
    Overview of our results: NP-c., W[1]/W[2]-h., p-NP-h., noPK~abbreviate NP-complete, W[1]/W[2]-hard, para-NP-hard, no polynomial kernel, respectively.
    $^a$\,(even on planar graphs, \Cref{thm:npcomplete})~~$^b$\,(even on planar graphs with maximum degree seven, \Cref{obs:noPK})
  }
  \begin{tabular}{@{}rllll@{}}  \toprule
	Problem		& \multicolumn{1}{l}{Complexity} & \multicolumn{3}{l}{Parameterized Complexity} \\
			& 		& $k$ 		& $\ell$ 	& $k+\ell$ \\\cmidrule{1-5}
	($st$-)\sspAcr 	& NP-c.$^a$ 	& XP, \W{1}-h.~\smtab{(\Cref{thm:w1hard}})	& p-NP-h.$^a$ 	& FPT~\smtab{(\Cref{cor:sspFPTkell})}/noPK$^b$ \\
	($st$-)\lspAcr 	& NP-c.$^a$ 	& p-NP-h.$^a$ 	& p-NP-h.$^a$ 	& p-NP-h.$^a$ \\
	($st$-)\supAcr 	& NP-c.$^a$ 	& XP, \W{2}-h.~\smtab{(\Cref{thm:w2hardness}}) 	& \emph{open}	& {FPT~\smtab{(\Cref{cor:supFPTkell})}/noPK$^b$} \\
	($st$-)\lupAcr 	& NP-c.$^a$ 	& p-NP-h.$^a$ 	& p-NP-h.$^a$ 	& \emph{open}/noPK$^b$  \\
  \bottomrule	
  \end{tabular}
  \label{results-table}
\end{table*}

\paragraph{Related Work}
\citet{ChechikJPP17} introduced the \textsc{Secluded Path} problem that, other than~\sspAcr{}, seeks to minimize the \emph{closed} neighborhood of the path in question, 
where the closed neighborhood are all vertices that are contained in the path or adjacent to a vertex in the path.
They proved~\textsc{Secluded Path} to be~\NP-hard on weighted or directed graphs of maximum degree four, and to be polynomial-time solvable in undirected unweighted graphs (note that we prove~\sspAcr{} to be~\NP-complete in this case).

\citet{FominGKK17}, building upon the work of \citet{ChechikJPP17}, studied the parameterized complexity of \textsc{Secluded Path} (in its weighted version).
They prove \textsc{Secluded Path} to be~\W{1}-hard when parameterized by the length of the path (which refers to the value~$k-1$ in \sspAcr{}). 
Moreover, they prove \textsc{Secluded Path} to be in FPT when parameterized by the size of the closed neighborhood of the path (which refers to the value~$k+\ell$ in \sspAcr{}), but admits no polynomial kernel when parameterized by the combined parameter size of the closed neighborhood of the path, treewidth and maximum degree of the underlying graph, unless~$\unlessPK$.
We remark that in the proofs of our related results~(\cref{thm:w1hard} and \cref{obs:noPK}), we use ideas similar to those of~\citet{FominGKK17}.

Van Bevern et al.~\cite{BevernFMMSS18} studied the problems of finding $st$-separators with small closed neighborhood (``secluded'') and of finding small $st$-separators with small open neighborhood (``small secluded'').
They motivated to distinguish between the size of the subgraph in question and the size of the open neighborhood.
In addition, they studied several other classical optimization problems in their ``secluded'' and ``small secluded'' variant.
Moreover, they also studied the~\textsc{Independent Set} problem, being a maximization problem, in its ``large secluded'' variant.

\citet{GolovachHLM17} studied the problem of finding connected secluded induced subgraphs.
They prove that the problem of finding a connected secluded $\mathcal{F}$-free subgraph is in FPT when parameterized by the size~$\ell$ of the (open) neighborhood.
They also prove that \lspTsc{} parameterized by~$\ell$ (which we prove to be para-NP-hard) is contained in FPT when asking for an induced path.

Van Bevern et al.~\cite{BevernFT18} studied data reduction and preprocessing for $st$-\sspAcr{} regarding several (structural) parameters.
Amongst others they prove that our kernelization lower bound (\cref{obs:noPK}) also holds true when additionally combining with the treewidth of the graph.

\section{Preliminaries}
\label{sec:prelims}
\noindent
We use basic notation from graph theory~\cite{Diestel10} and parameterized complexity theory~\cite{DowneyF13,CyganFKLMPPS15}.

A path~$P$ of length~$\ell-1$ is a graph with vertex set~$\{v_1,\ldots,v_{\ell}\}$ and edge set~$\{\{v_i,v_{i+1}\}\mid 1\leq i< \ell\}$.
We call~$v_1$ and~$v_{\ell}$ the endpoints of~$P$, and hence also refer to~$P$ as a~$v_1$-$v_{\ell}$~path.
For a graph~$G=(V,E)$, we denote by~$N_G(W):=\{v\in V\setminus W\mid \exists w\in W: \{v,w\}\in E\}$ for any~$W\subseteq V$ the~\emph{open} neighborhood of~$W$ in~$G$.
We denote by~$\deg_G(v)=|N_G(v)|$ the degree of vertex~$v$ in~$G$, and by~$\Delta(G)=\max_{v\in V(G)}\deg_G(v)$ the maximum vertex-degree of~$G$.
We say that a path~$P$ is \emph{$k$-short} (\emph{$k$-long}) if~$|V(P)|\leq k$ ($|V(P)|\geq k$).
We say that a path~$P$ is \emph{$\ell$-secluded} (\emph{$\ell$-unsecluded}) if~$|N_G(V(P))|\leq \ell$ ($|N_G(V(P))|\geq \ell$).
A parameterized problem~$L$ is fixed-parameter tractable (in FPT) if for each instance~$(x,p)$ in~$f(p)\cdot |x|^{\Oh(1)}$ time one can decide whether~$(x,p)\in L$, for some computable function~$f$ only depending on~$p$.
If~$L$ is~$\W{1}$- or~$\W{2}$-hard, then it is presumably not fixed-parameter tractable.
A problem kernelization for a parameterized problem~$L$ is a polynomial-time algorithm that maps any instance~$(x,p)$ to an instance~$(x',p')$ (the problem kernel) such that~$(x,p)\in L$ if and only if~$(x',p')\in L$ and~$|x'|+p'\leq f(p)$ for some computable function~$f$ only depending on~$p$.
If~$f$ is a polynomial in~$p$, then~$(x',k')$ is a polynomial problem kernel.

\section{NP-completeness}

We show that even in planar graphs with small maximum degree, each of \gprob{} is \NP-complete, 
in some cases even if the requested size of the path or the size of the open neighborhood is constant.

\begin{theorem}
 \label{thm:npcomplete}
 The following problems are \NP-complete, even on planar graphs:
 \begin{compactenum}[(a)]
  \item \sspAcr{} even if~$\ell=0$ and~$\Delta=3$;
  \item \lspAcr{} even if~$\ell=0$,~$\Delta=3$, and~$k= 1$;
  \item \supAcr{} even if~$\Delta= 5$;
  \item \lupAcr{} even if~$\ell=0$ and~$\Delta= 3$, or~$k= 1$.
 \end{compactenum}
\end{theorem}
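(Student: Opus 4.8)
We first observe that all four problems lie in \NP{}: a candidate path~$P$ is a polynomial-size certificate, and both~$|V(P)|$ and~$|N_G(V(P))|$ are computable in polynomial time, so the required inequalities can be checked efficiently. For \NP-hardness the plan is to reduce, in every case, from \textsc{Hamiltonian Path} (and its generalization \textsc{Longest Path}), which is \NP-complete already on connected planar graphs of maximum degree three. Three elementary observations drive the reductions. First, if~$\ell=0$ and the neighborhood bound is of the ``secluded'' type, then~$N_G(V(P))=\emptyset$ forces~$V(P)$ to be a union of connected components, and since~$P$ is connected,~$V(P)$ is a single component of which~$P$ is a spanning, hence Hamiltonian, path. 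Second, if~$\ell=0$ and the bound is of the ``unsecluded'' type, then~$|N_G(V(P))|\ge 0$ holds vacuously and only the length constraint survives. Third, if~$k=1$, the length constraint is vacuous for any nonempty path.

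\emph{Parts (a) and (b).} Here I would use the first observation. Given a connected planar graph~$H$ on~$n$ vertices with~$\Delta(H)\le 3$, output~$(H,k,\ell)$ with~$\ell=0$ and~$k=n$ for~\sspAcr{}, respectively~$k=1$ for~\lspAcr{}. Since~$H$ is connected, every path~$P$ with~$N_H(V(P))=\emptyset$ satisfies~$V(P)=V(H)$ and is therefore a Hamiltonian path of~$H$; conversely, a Hamiltonian path of~$H$ is such a path with~$n\le k$ (for~\sspAcr{}) resp.~$n\ge k$ (for~\lspAcr{}) vertices. Planarity and the degree bound are inherited from~$H$, which also gives the stated extra restrictions ($\ell=0$, $\Delta=3$, and additionally~$k=1$ for~\lspAcr{}). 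For the case~$\ell=0$, $\Delta=3$ of part~(d), the second observation shows that \lupAcr{} is then exactly the question whether~$G$ has a path on at least~$k$ vertices, so~$(H,k:=n,\ell:=0)$ is a reduction from \textsc{Hamiltonian Path} on connected planar subcubic graphs.

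\emph{Part (c), and part (d) with~$k=1$.} For these I would use a padding construction exploiting the third observation. From a connected planar graph~$H$ on~$n$ vertices with~$\Delta(H)\le 3$, build~$G$ by attaching to each~$v\in V(H)$ two new degree-one vertices~$a_v$ and~$b_v$; then~$G$ is planar with~$\Delta(G)\le 5$ and~$|V(G)|=3n$. Output~$(G,k,\ell)$ with~$\ell:=2n$, and~$k:=n$ for~\supAcr{} resp.~$k:=1$ for~\lupAcr{}. A Hamiltonian path of~$H$ is a path~$P$ in~$G$ with~$V(P)=V(H)$, hence~$|V(P)|=n$ and~$N_G(V(P))=\{a_v,b_v\mid v\in V(H)\}$ of size~$2n$. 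For the converse, suppose~$P$ is a path in~$G$ containing~$q$ vertices of~$V(H)$ and~$p$ pendant vertices. At most~$n-q$ original vertices lie outside~$P$, and at most~$2q$ pendants can be neighbors of~$P$ (a pendant is a neighbor of~$P$ only via its unique original vertex, which must then lie on~$P$), so~$|N_G(V(P))|\le (n-q)+2q = n+q\le 2n$; equality forces~$q=n$, so~$V(H)\subseteq V(P)$, and forces all~$2n$ pendants to be neighbors of~$P$, hence none of them on~$P$, i.e.\ $p=0$. Thus~$V(P)=V(H)$, and since among~$V(H)$ the edges of~$G$ are precisely those of~$H$,~$P$ is a Hamiltonian path of~$H$. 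The length bound is met ($|V(P)|=n\le k=n$ for~\supAcr{}; vacuous for~\lupAcr{}).

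\emph{Main obstacle.} The only step carrying real content is the counting argument for parts~(c) and~(d): one must verify that a path cannot ``cheat'' by omitting an original vertex, which it cannot because omitting~$v$ deletes~$a_v$ and~$b_v$ from the candidate neighborhood while adding back only~$v$ itself --- a net loss. This is precisely why two pendants per vertex (and hence~$\Delta=5$) are needed instead of one. A secondary issue is to cite the right source for \NP-completeness of \textsc{Hamiltonian Path}/\textsc{Longest Path} on connected planar graphs of maximum degree three; the~$st$-variants listed in \cref{results-table} follow from the same constructions by designating the two endpoints of the sought Hamiltonian path (in parts~(a),(b),(d)) and by a minor adjustment of the padding gadget for part~(c).
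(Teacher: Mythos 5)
Your proposal is correct and follows essentially the same route as the paper: the identity reduction from \textsc{Planar Cubic Hamiltonian Path} with $\ell=0$ (and $k=n$ resp.\ $k=1$) for parts (a), (b) and the first case of (d), and the two-pendants-per-vertex padding with $\ell=2n$ for part (c) and the second case of (d), including the same counting bound $|N_G(V(P))|\leq n+q\leq 2n$ with equality forcing a Hamiltonian path. Your write-up of the counting step is in fact slightly more explicit than the paper's, but the construction and argument are the same.
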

In the proof of~\cref{thm:npcomplete}, we give many-one reductions from the \NP-complete~\cite{GareyJT76} 
\textsc{Planar Cubic Hamiltonian Path (PCHP)} problem:
Given an undirected, planar, cubic, connected graph $G=(V,E)$, the question is whether there is a path in $G$ that contains every vertex in~$V$ exactly once.

\begin{proof}%
 The containment in \NP{} is immediate.
 Let $(G)$ be an instance of \textsc{PCHP}.
 Let $G'$ denote a copy of~$G$.
 Denote by~$G''$ the graph obtained from~$G'$ by adding for each vertex~$v\in V$ two vertices to~$G'$ and making them adjacent only with~$v$.
 
 \emph{(a) \& (d)}: Construct the instance $(G',k=n,\ell=0)$.
 On the one hand, note that no neighboring vertices are allowed, hence $G'$ admits a $k$-short $\ell$-secluded path if and only if $G$ admits a Hamiltonian path.
 On the other hand, note that all vertices are required to be contained in the path, hence~$G'$ admits a $k$-long $\ell$-unsecluded path if and only if $G$ admits a Hamiltonian path.
 
 \emph{(b)}: Construct the instance $(G',k=1,\ell=0)$.
 Observe that no neighboring vertices are allowed, hence~$G'$ admits a $k$-long $\ell$-secluded path if and only if $G$ admits a Hamiltonian path.
 
 \emph{(c)}: Construct the instance $(G'',k=n,\ell=2n)$.
 Observe that every path with at least~$2n$ neighbors needs to contain all the vertices in~$G'$, as every path of length~$1\leq r\leq n$ has at most~$2r+(n-r)=n+r\leq 2n$ neighbors.
 Hence, $G''$ admits a $k$-short $\ell$-unsecluded path if and only if $G$ admits a Hamiltonian path.
 
 \emph{(d)}: Construct the instance $(G'',k=1,\ell=2n)$.
 Again, similar to~(c), every path with at least~$2n$ neighbors needs to contain all the vertices in~$G'$.
 Analogously, $G'$ admits a $k$-long $\ell$-unsecluded path if and only if~$G$ admits a Hamiltonian path.
\end{proof}

Next we prove that the $st$-variants are \NP-complete in the same restricted cases, that is, on planar graphs of small maximum degree.

\begin{theorem}
 Even on planar graphs with~$s$ and~$t$ being on the outerface, the following problems are \NP-complete:
 \begin{compactenum}[(a)]
  \item $st$-\sspAcr{} and $st$-\lupAcr{} for any constant~$\ell\geq 0$ and~$\Delta\geq 4$;
  \item $st$-\lspAcr{} for any constant~$\ell\geq 0$,  $k\geq 2$, and~$\Delta\geq 4$.
  \item $st$-\supAcr{} for any constant~$\Delta\geq 6$; 
  \item $st$-\lupAcr{} for any constant~$\ell\geq 0$ and~$\Delta\geq 4$, or~$k\geq 1$ and $\Delta\geq 6$.
 \end{compactenum}
\end{theorem}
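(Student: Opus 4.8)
The plan is to reduce from \textsc{PCHP}, as in the proof of \cref{thm:npcomplete}, but now through an intermediate endpoint-constrained Hamiltonicity problem. \textbf{Step 1.} First I would establish that it is \NP-complete to decide, given a planar graph $H_0$ of maximum degree four and two distinct vertices $s,t$ on the outer face of some plane embedding, whether $H_0$ has a Hamiltonian $s$-$t$ path. This follows by a routine modification of the reduction of Garey, Johnson, and Tarjan underlying \textsc{PCHP}: from the planar cubic connected graph $G$, pick a vertex $v$ on its outer face, ``split'' $v$ by a small planar gadget two of whose ports become the new vertices $s$ and $t$, and verify that (i) the embedding stays planar with $s,t$ on the outer face, (ii) the maximum degree is at most four, and (iii) Hamiltonian $s$-$t$ paths of $H_0$ correspond exactly to Hamiltonian paths of $G$. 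Write $n:=|V(H_0)|$. All four reductions then attach cheap gadgets to $H_0$ and pick $k,\ell$ so that the prescribed length or secludedness constraint forces the sought $s$-$t$ path to be Hamiltonian on the relevant part of the graph.

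\textbf{Step 2: the secludedness-constrained cases} ($st$-\sspAcr{}, $st$-\lspAcr{}, and $st$-\lupAcr{} in the regime ``constant $\ell$, $\Delta\le4$''). Given the target constant $\ell_0\ge1$ (for $\ell_0=0$ just use $H_0$ itself), build $H_0'$ from $H_0$ by replacing $s$ with a path $s\,y_1\,y_2\cdots y_{\ell_0}\,s'$, where $s'$ is made adjacent to the former neighbours of $s$, and by attaching one pendant vertex $p_i$ to each $y_i$. Then $\Delta(H_0')\le\max\{\Delta(H_0),3\}=4$, the graph stays planar with $s,t$ on the outer face, and $H_0'-\{p_1,\dots,p_{\ell_0}\}$ has a Hamiltonian $s$-$t$ path iff $H_0$ does (the $y_i$-chain is forced). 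Each $y_i$ is a cut vertex separating $s$ from $t$, so every $s$-$t$ path of $H_0'$ contains all $y_i$ and hence has all $\ell_0$ pendants in its open neighbourhood, while no pendant can be an interior vertex of an $s$-$t$ path. Consequently, an $s$-$t$ path with at most $\ell_0$ neighbours has open neighbourhood exactly $\{p_1,\dots,p_{\ell_0}\}$, so by connectedness of $H_0'-\{p_i\}$ it covers all non-pendant vertices and thus yields a Hamiltonian $s$-$t$ path of $H_0$; conversely, such a Hamiltonian path gives an $s$-$t$ path of $H_0'$ with exactly $n+1+\ell_0$ vertices and exactly $\ell_0$ neighbours. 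Hence setting $\ell:=\ell_0$ and $k:=2$ works for $st$-\lspAcr{}; $\ell:=\ell_0$ and $k:=n+1+\ell_0$ works for $st$-\sspAcr{}; and $\ell:=\ell_0$ and $k:=n+1+\ell_0$ works for $st$-\lupAcr{} (here the length bound does the forcing and the $\ell$-bound is met automatically). A prescribed constant $k_0\ge2$ for $st$-\lspAcr{} is handled by noting $n+1+\ell_0\ge k_0$ for all but finitely many instances.

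\textbf{Step 3: the unsecludedness-constrained cases} ($st$-\supAcr{}, and $st$-\lupAcr{} in the regime ``constant $k$, $\Delta\le6$''). Here I mimic the graph $G''$ of \cref{thm:npcomplete}: let $H_0''$ be $H_0$ with two pendant vertices attached to every vertex, so $\Delta(H_0'')\le6$, planarity is preserved, and $s,t$ stay on the outer face. Since $s,t\in V(H_0)$ and pendants are never interior vertices of an $s$-$t$ path, every $s$-$t$ path $P$ of $H_0''$ has $V(P)\subseteq V(H_0)$ and $|N(V(P))|\le 2|V(P)|+(n-|V(P)|)=n+|V(P)|\le 2n$, with equality iff $P$ is Hamiltonian on $H_0$. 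Thus $\ell:=2n$, $k:=n$ gives the reduction for $st$-\supAcr{}, and $\ell:=2n$ together with any prescribed constant $k_0\ge1$ gives it for $st$-\lupAcr{}: the constructed instance is a \yes{}-instance iff $H_0$ has a Hamiltonian $s$-$t$ path iff $G$ has a Hamiltonian path. Membership in \NP{} is immediate for all four problems, and any larger constant maximum degree is obtained by attaching a small inert high-degree gadget that cannot lie on an $s$-$t$ path.

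\textbf{Expected main obstacle.} The crux is Step~1: producing one planar, maximum-degree-four instance with $s$ and $t$ prescribed on the outer face. Since \textsc{PCHP} leaves both endpoints of the Hamiltonian path free, one must pin them down with a vertex-splitting gadget and argue carefully that this neither breaks planarity (nor the outer-face condition) nor pushes the maximum degree above four. The remaining, more mechanical difficulties are the degree and neighbourhood bookkeeping in $H_0'$ and $H_0''$: checking that the relocated copy $s'$ of $s$ is an admissible interior vertex, that every added pendant is genuinely impossible to place on an $s$-$t$ path, and that such pendants are always (respectively never) counted in the open neighbourhood, so that the chosen $k$ and $\ell$ indeed force Hamiltonicity.
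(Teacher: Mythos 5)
Your Steps~2 and~3 are essentially sound and match the paper's gadgetry (pendants forcing the neighborhood budget for the secluded variants, two pendants per vertex for the unsecluded ones), but the proof stands or falls with Step~1, which you correctly identify as the crux and then do not actually carry out. As sketched, Step~1 is flawed in two ways. First, a vertex-splitting gadget at a single vertex $v$ makes Hamiltonian $s$-$t$ paths of $H_0$ correspond to Hamiltonian \emph{cycles} of $G$ (the two path-edges at $s$ and $t$ reassemble into two cycle-edges at $v$), not to Hamiltonian \emph{paths} of $G$ as you claim; a Hamiltonian path of $G$ has free endpoints that need not lie anywhere near $v$, and no ``small planar gadget'' at one vertex can capture that. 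Second, even after switching to Hamiltonian cycle as the source problem, a naive split of a cubic vertex $v$ into $s$ (receiving one edge) and $t$ (receiving the other two) fails: a Hamiltonian cycle of $G$ uses exactly two of the three edges at $v$, and if it happens to use the two edges assigned to $t$, it does not open up into an $s$-$t$ path. So the correspondence ``$H_0$ has a Hamiltonian $s$-$t$ path iff $G$ has a Hamiltonian path'' that everything else in your argument relies on is not established.

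The paper resolves precisely this point by reducing from \textsc{Planar Cubic Hamiltonian Cycle} rather than from \textsc{PCHP}: fix a plane embedding with $x$ and two of its neighbors $y,z$ on the outer face, attach a new vertex $s$ to $x$ only and a new vertex $t$ to \emph{both} $y$ and $z$. Since a Hamiltonian cycle uses two of the three edges at $x$, it must use at least one of $\{x,y\},\{x,z\}$, and deleting that edge and routing through $s$ and $t$ always yields a Hamiltonian $s$-$t$ path; conversely any such path closes back into a Hamiltonian cycle. This is the missing ingredient you would need to make Step~1 (and hence the whole proposal) go through; once it is in place, your pendant-chain construction for constant $\ell$ and your two-pendants-per-vertex construction for the unsecluded cases work as you describe (modulo minor degree bookkeeping, e.g.\ your relocated vertex $s'$ can reach degree $\Delta(H_0)+1$).
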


\begin{proof}
 We give a many-one reduction from the \NP-complete~\cite{GareyJT76} \textsc{Planar Cubic Hamiltonian Cycle (PCHC)}, which is PCHP where instead of asking for a Hamiltonian path, one asks for a Hamiltonian cycle.

 Let $\calI=(G)$ be an instance for PCHC, and let $c\in\N\cup\{0\}$ be a constant.
 We construct an instance $\calI'=(G',s,t,k,\ell)$ as follows.
 Let $G''$ denote a copy of~$G$, and let~$G'$ initially be~$G''$.
 Consider a plane embedding of~$G$ such that $x$,$y$,$z$ are incident to the outerface and $y,z$ are neighbors of~$x$.
 We add $s$ and $t$ to~$G'$, as well as the edges $\{s,x\}$ and $\{y,t\},\{z,t\}$.
 Next, we add a set~$Z$ of $c$ vertices to~$G'$ and make each vertex in~$Z$ adjacent only with~$s$.
 This finishes the construction of~$G'$.
 Finally, we set $k=n+2$ and $\ell=c$.
 This finishes the construction of~$\calI'$.
 We exemplify the correctness via \emph{(a)} $st$-SSP.
 
 Let~$G$ admit a Hamiltonian cycle~$C$.
 As~$G$ is cubic, vertex~$x$ has three neighbors including~$y$ and~$z$, at least one of them is connected to~$x$ in the cycle~$C$.
 Assume it is~$y$ (for~$z$ the arguments work analogously).
 Then, we construct an $st$-path~$P$ as follows.
 We set~$V(P):=\{s,t,V(C)\}$.
 Next, we set~$E(P):=\{\{s,x\},\{y,t\}\}\cup (E(C)\setminus \{x,y\})$.
 Intuitively,~$P$ is starting at~$s$, going to~$x$ and following cycle~$C$, starting at the neighbor of~$x$ not being~$y$, and ending at~$y$, and finally taking the edge from~$y$ to~$t$.
 Clearly,~$P$ is an~$st$-path and contains~$n+2$ vertices.
 Moreover,~$P$ is also~$\ell$-secluded as~$N_{G'}(V(P))=Z$.
 It follows that~$\I'$ is a \yes-instance.
 
 Conversely, let $G'$ admit a $k$-short $\ell$-secluded $st$-path~$P$.
 Note that~$Z\subseteq N_{G'}(V(P))$ (and hence~$V(P)\cap Z=\emptyset$).
 Moreover, since~$|Z|=c$, $Z= N_{G'}(V(P))$, that is,~$P$ has no neighbors outside of~$Z$.
 Since~$x\in V(P)$, $P$ must contain all vertices in~$V(G'')$.
 Moreover, $P$ needs to contain either edge~$\{y,t\}$ or edge~$\{z,t\}$.
 Assume it is~$\{y,t\}$ (for~$\{z,t\}$ the arguments work analogously).
 Let~$P'\subseteq P$ be the subpath of~$P$ with~$V(P')=V(G'')$.
 Then~$C:=(V(P'),E(P')\cup\{y,x\})$ forms a cycle in~$G''$ containing every vertex exactly once.
 It follows that~$\I$ is a \yes-instance.
 
 \emph{(b)}: Observe that for $st$-\lspAcr{}, $\ell=c$ forces the path to visit all vertices in~$G$, and hence even with~$k=2$ the statement follows.
 
 \emph{(c)}: For~$st$-\supAcr{}, modify~$\I'$ as follows.
 For each vertex in~$v\in V(G'')$, add two vertices and make them adjacent only with~$v$.
 Denote the obtained graph by~$G_+$.
 Set~$\ell'=2n+c$.
 Consider the instance~$\I_+:=(G_+,s,t,k,\ell')$.
 With the same arguments as in the proof of~\cref{thm:npcomplete}, the statement follows.
 Note here that~$\Delta(G_+)=\max\{c,6\}$.
 
 \emph{(d)}: For $st$-\lupAcr{},~$k=n+2$ forces the path to visit all vertices in~$G$.
 Additionally consider the instance~$\I_+'=(G_+,s,t,k',\ell')$ with~$k'\geq 2$, where $\ell'=2n+c$ again forces the path to visit all vertices in~$G$.
\end{proof}

\section{Parameterized Complexity}

In this section, we study the parameterized complexity of ($st$-)\gprobTsc{}.
In~\cref{ssec:upperbounds}, we prove~$st$-\sspAcr{} and~$st$-\supAcr{} to be in \FPT{} when parameterized by~$k+\ell$, and to be in~$\XP$ when parameterized by~$k$.
In~\cref{ssec:lowerbounds}, we prove fixed-parameter intractability regarding~$k$ for \sspAcr{} and~\supAcr{}, 
and prove that~($st$-)\gprobTsc{} admits no problem kernel of size polynomial in~$k+\ell$ unless~\unlessPK{}.

We first relate \gprobTsc{} and their $st$-variants.

\begin{lemma}
  \label{lem:manyoneredtost}
 There is a many-one reduction that maps any instance~$(G,k,\ell)$ of \gprobTsc{} in polynomial time to an instance~$(G',s,t,k',\ell')$ of its $st$-variant such that 
    $k'=k+2$ and~$\ell'=2({|V(G)|\choose2}-1)+\ell$. 
\end{lemma}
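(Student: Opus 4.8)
We want to take an arbitrary instance $(G,k,\ell)$ of the non-$st$ problem and build an equivalent instance $(G',s,t,k',\ell')$ of the $st$-variant. The core difficulty is that in the non-$st$ version the path $P$ can start and end anywhere, whereas in the $st$-version it is forced to run from $s$ to $t$. The natural fix is to attach $s$ and $t$ to *every* vertex of $G$, so that any path $P$ in $G$ can be extended to an $st$-path $s$–$u$–$\cdots$–$w$–$t$ where $u,w$ are its original endpoints; conversely an $st$-path in $G'$ restricted to $V(G)$ is a path in $G$. This explains $k' = k+2$: we add exactly the two vertices $s$ and $t$ to the path. The subtle point is the neighborhood count: attaching $s$ and $t$ to all of $V(G)$ means that once a path in $G'$ contains $s$ and $t$ (which it must), the vertices $s$ and $t$ contribute nothing new to the open neighborhood (they are *in* the path), but every vertex of $V(G)$ not on the path becomes a neighbor via $s$ or $t$ regardless — which would destroy the $\ell$-secluded distinction. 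So that naive construction is too lossy; instead one must add $s$ and $t$ but connect them more cleverly, or pad with pendant-style gadgets so the neighborhood size of the $G'$-path differs from that of the $G$-path by a *fixed, predictable* additive constant, namely $2(\binom{|V(G)|}{2}-1)$.

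Concretely, the plan is: let $n = |V(G)|$ and set $N := \binom{n}{2}$. Introduce $s$ and $t$, and between $s$ and $t$ install $N$ internally-disjoint length-two paths through fresh middle vertices $m_1,\dots,m_N$ (so $s$ and $t$ are the two endpoints of a "theta-like" gadget with $N$ parallel $s$–$m_i$–$t$ routes). Also make $s$ and $t$ each adjacent to every vertex of $V(G)$. Now an $st$-path in $G'$ must leave $s$ and reach $t$; it either goes straight $s$–$m_i$–$t$ (a trivial path, handling the degenerate cases) or goes $s$–$u$–$\dots$–$w$–$t$ with $u,\dots,w$ a path $P$ in $G$. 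In the latter case its open neighborhood consists of: the $N-1$ unused middle vertices $m_j$; the original neighborhood $N_G(V(P))$ inside $G$; plus $s$ and $t$ are on the path so contribute nothing; plus — and here is the catch again — *every vertex of $V(G)\setminus V(P)$* is adjacent to $s$. That still over-counts. The way to actually make it work is to NOT attach $s,t$ to all of $V(G)$, but to mirror the idea from the $st$-NP-hardness reduction above: attach $s$ only to a designated vertex and $t$ only to a designated vertex won't work for arbitrary endpoints either. The honest resolution, and the one matching the stated $\ell' = 2(N-1)+\ell$, is: add $s,t$ adjacent to *all* of $V(G)$, but *also* add, for each vertex $v\in V(G)$, a large pendant structure so that $v\notin V(P)$ forces far more than $\ell'$ neighbors — no. Let me restate: the clean construction is to subdivide. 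Add $s$ and $t$; for every $v \in V(G)$ add edges $\{s,v\}$ and $\{t,v\}$; and additionally add $N-1$ pendant-free "dummy" $s$–$t$ connectors $m_1,\dots,m_{N-1}$ each adjacent to both $s$ and $t$. An $st$-path through $G$ then has open neighborhood $= N_G(V(P)) \,\cup\, (V(G)\setminus V(P)) \,\cup\, \{m_1,\dots,m_{N-1}\}$, and since $|N_G(V(P))\cup(V(G)\setminus V(P))| \le (n-|V(P)|) + |N_G(V(P))|$ and we want this capped appropriately, one sets $\ell'$ to absorb the worst case $(n-1) + \ell$ plus the $m_i$'s — but $(n-1)+(N-1) = \binom{n}{2}+n-2$, which is not $2(N-1)+\ell$. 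So the precise bookkeeping must be pinned down carefully; the value $2\binom{n}{2}-2$ strongly suggests the gadget adds *two* fresh vertices per pair, i.e. each connector $m_i$ itself gets a pendant, giving $2(N-1)$ extra neighborhood vertices that are always present, and the $st$-endpoints are arranged (via local gadgets of degree-preserving subdivisions) so that $s$ and $t$ are *not* adjacent to the rest of $V(G)$ but only to the endpoints of $P$ in a way that is handled by trying all $\binom{n}{2}$ candidate endpoint pairs inside a single instance — which is exactly what $N$ parallel tracks buys you.

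Given that, the steps are: (1) describe $G'$ precisely, fix $k'=k+2$ and $\ell' = 2(N-1)+\ell$; (2) forward direction — given a $P$ in $G$ witnessing the $(G,k,\ell)$ instance, pick the track corresponding to its endpoint pair, route $s$ and $t$ through that track's attachment, verify the extended path has $|V| = |V(P)|+2 \le k'$ (or $\ge k'$ for the long variants, using that $k'=k+2$ preserves the inequality direction) and compute its open neighborhood as $N_G(V(P))$ together with exactly the $2(N-1)$ always-present dummy/pendant vertices, giving $|N_{G'}(\cdot)| = |N_G(V(P))| + 2(N-1) \le \ell'$ (resp. $\ge\ell'$ for the unsecluded variants, where the extra $2(N-1)$ is a fixed offset so $\ge\ell$ transfers to $\ge\ell'$); (3) reverse direction — given an $st$-path $P'$ in $G'$, argue it must use exactly one track, so $P' = s\,{-}\,u\,{-}\dots{-}\,w\,{-}\,t$ with the middle a path $P$ in $G$, and its open neighborhood is again $N_G(V(P))$ plus the fixed $2(N-1)$ dummies, so the constraints on $P'$ translate back to the constraints on $P$; (4) note $G'$ is built in polynomial time since $N = \binom{n}{2} = O(n^2)$. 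I expect the main obstacle to be step (1)/(3): designing the $s$–$t$ track-and-pendant gadget so that (i) every $st$-path is forced to pick a single track and thereby commit to a single ordered endpoint pair in $G$ without being able to "cheat" by weaving between tracks, and (ii) the open neighborhood offset is *exactly* $2(N-1)$ independent of which track or which $P$ is chosen — the "exactly, uniformly" part is what makes the additive constant $2(\binom{|V(G)|}{2}-1)$ legitimate, and getting the pendant placement to yield precisely that constant (rather than something that only bounds it from one side) is the delicate bookkeeping. The direction-of-inequality check for the four variants is then routine once the offset is exact.
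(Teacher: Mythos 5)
You correctly reverse-engineer the shape of the answer --- the offset $2\bigl(\binom{|V(G)|}{2}-1\bigr)$ must come from an exact, uniform contribution of two vertices per candidate endpoint pair, and the instance must ``try all $\binom{n}{2}$ endpoint pairs in parallel'' --- but the proposal never actually produces a construction that achieves this. The two gadgets you do write down are either explicitly abandoned (attaching $s,t$ to all of $V(G)$, which you rightly note pollutes the neighborhood with $V(G)\setminus V(P)$) or left as an acknowledged open design problem (``the precise bookkeeping must be pinned down carefully'', ``designing the track-and-pendant gadget \dots is the delicate bookkeeping''). Since the entire content of the lemma is that construction, this is a genuine gap: a referee could not verify the claim from what you wrote.

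The missing idea is simpler than the pendant machinery you are circling around: for each unordered pair $\{v,w\}\in\binom{V(G)}{2}$, add an entire \emph{disjoint copy} $G_{\{v,w\}}$ of $G$ to $G'$, and make $s$ adjacent only to the copy of $v$ and $t$ adjacent only to the copy of $w$ inside that copy. Disjointness immediately kills your ``weaving between tracks'' worry: a simple $st$-path enters one copy through $s$'s attachment vertex and can never reach another copy without revisiting $s$ or $t$, so it commits to exactly one pair. The exact offset also falls out for free: since $s$ and $t$ lie on the path, each of the $\binom{n}{2}-1$ unused copies contributes precisely its two attachment vertices (which are distinct, as $v\neq w$) to the open neighborhood, and the used copy contributes exactly $N_G(V(P))$ --- no pendants, no subdivisions, no case analysis over which track is chosen. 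With that construction in hand, your steps (2)--(4), including the observation that a fixed additive offset preserves both directions of the inequalities for all four variants, go through as you outlined.
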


\begin{proof}
Given a non-trivial instance~$\I:=(G,k,\ell)$, construct instance~$\I':=(G',s,t,k',\ell')$ as follows.
Let~$G'$ initially only consist of the (isolated) vertices~$s$ and~$t$.
Next, for each pair~$\{v,w\}\in \binom{V(G)}{2}$, add a copy~$G_{\{v,w\}}$ of~$G$ to~$G'$ and make~$s$ adjacent with~$v$ and~$t$ adjacent with~$w$. 
Observe that every~$st$-path in~$G'$ must contain---except for~$s$ and~$t$---only vertices in exactly one copy of~$G$ in~$G'$.
Hence, every~$st$-path has~$2({|V(G)|\choose2}-1)$ unavoidable neighbors.

Let~$\I$ be a \yes-instance and let~$P$ be a path with endpoints~$v$ and~$w$ forming a solution to~$\I$.
Consider the copy~$G_{\{v,w\}}$ in~$G'$, and let~$P^*$ denote the copy of~$P$ in~$G_{\{v,w\}}$.
Then the path~$P'=(V(P^*)\cup\{s,t\},E(P^*)\cup\{\{s,v\},\{w,t\}\})$ in~$G'$ forms a solution to~$\I'$ as $||V(P)|-|V(P')||=|k-k'|=2$ and $||N_G(V(P))|-|N_{G'}(V(P'))||=|\ell-\ell'|=2({|V(G)|\choose2}-1)$.

Let~$\I'$ be a \yes-instance and let~$P'$ be a path forming a solution to~$\I'$.
Let~$N_{P'}(s)=\{v\}$ and~$N_{P'}(t)=\{w\}$.
Let~$G_{\{v,w\}}$ be the copy of~$G$ with~$V(G_{\{v,w\}})\cap V(P')\neq \emptyset$.
Let~$P^*$ denote~$P'$ restricted to~$G_{\{v,w\}}$, that is, the path with vertex set~$V(G_{\{v,w\}})\cap V(P')$.
Let~$P$ be the copy of~$P^*$ in~$G$.
We have $||V(P)|-|V(P')||=|k-k'|=2$ and $||N_G(V(P))|-|N_{G'}(V(P'))||=|\ell-\ell'|=2({|V(G)|\choose2}-1)$, and hence,~$P$ forms a solution to~$\I$.
\end{proof}

As the many-one reduction given in~\cref{lem:manyoneredtost} is also a parameterized reduction regarding the solution size~$k$, we get the following.

\begin{corollary}
  \label{cor:whardnesstransfer}
 If \gprob{} is \W{i}-hard with respect to~$k$, then its $st$-variant is \W{i}-hard with respect to~$k$, for every $i\geq 1$.
\end{corollary}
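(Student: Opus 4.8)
The plan is to observe that Corollary~\ref{cor:whardnesstransfer} is an immediate consequence of Lemma~\ref{lem:manyoneredtost} together with the standard fact that $\W{i}$-hardness is preserved under parameterized reductions. So the work amounts to checking that the many-one reduction of Lemma~\ref{lem:manyoneredtost} is in fact a \emph{parameterized} reduction when the parameter is taken to be the path-length bound~$k$.

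First I would recall the definition: a parameterized reduction from $(L,\kappa)$ to $(L',\kappa')$ is a mapping $(x,p)\mapsto(x',p')$ computable in time $f(p)\cdot|x|^{\Oh(1)}$ such that $(x,p)\in L \iff (x',p')\in L'$ and $p' \leq g(p)$ for some computable~$g$. For the reduction of Lemma~\ref{lem:manyoneredtost}, the input is $(G,k,\ell)$ with parameter~$k$, and the output is $(G',s,t,k',\ell')$ with parameter~$k'=k+2$. The correctness equivalence ($\I$ is a \yes-instance iff $\I'$ is) is exactly the content of Lemma~\ref{lem:manyoneredtost}, so that half is already done. It remains to verify the two structural requirements: (i) the new parameter is bounded by a function of the old one, which is trivial since $k'=k+2$; and (ii) the construction runs in the required time. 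Here I note the construction builds $\binom{|V(G)|}{2}$ copies of~$G$ plus two vertices and $2\binom{|V(G)|}{2}$ extra edges, hence runs in time polynomial in~$|x|=|(G,k,\ell)|$, which trivially fits the $f(p)\cdot|x|^{\Oh(1)}$ bound (one may take $f\equiv 1$). Thus the map is a parameterized reduction from \gprob{} parameterized by~$k$ to its $st$-variant parameterized by~$k$.

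Finally I would conclude: since parameterized reductions compose with $\W{i}$-hardness in the usual way (if $(L,\kappa)$ is $\W{i}$-hard and there is a parameterized reduction from $(L,\kappa)$ to $(L',\kappa')$, then $(L',\kappa')$ is $\W{i}$-hard), applying this with $(L,\kappa)$ the $k$-parameterized \gprob{} variant and $(L',\kappa')$ its $k$-parameterized $st$-variant yields the claim for every $i\geq 1$. I expect no genuine obstacle here: the only thing to be careful about is that the reduction is stated as a ``many-one reduction'' mapping whole instances, so one must phrase it cleanly as acting on parameterized instances and point explicitly at the clauses $k'=k+2$ and ``polynomial time'' in the lemma statement. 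In practice this corollary is a one-line remark, and the text preceding it (``As the many-one reduction given in \cref{lem:manyoneredtost} is also a parameterized reduction regarding the solution size~$k$'') already does essentially all of the argument; the proof need only make the boundedness of $k'$ and the preservation of hardness explicit.
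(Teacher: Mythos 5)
Your proposal is correct and matches the paper's argument exactly: the paper justifies this corollary solely by the remark that the reduction of \cref{lem:manyoneredtost} is a parameterized reduction with respect to~$k$ (since $k'=k+2$, the construction is polynomial time, and correctness is the content of the lemma), and $\W{i}$-hardness is preserved under such reductions. No further comment is needed.
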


On the contrary, with a similar idea as in~\cref{lem:manyoneredtost}, one can see that positive results for the~$st$-variants propagate to their counter parts.

\begin{lemma}
  \label{obs:turredtost}
  If any instance $\I=(G,s,t,k,\ell)$ of~$st$-\gprobTsc{} can be decided in~$f(|\I|,k,\ell)$-time,
  then instance~$(G,k,\ell)$ of~\gprobTsc{} can be decided in $\Oh(|V(G)|^2 f(|\I|,k,\ell))$-time.
\end{lemma}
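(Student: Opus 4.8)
The plan is to decide $(G,k,\ell)$ by brute force over the (unordered) pair of endpoints of the sought path. Every non-empty simple path in $G$ is, up to the choice of internal vertices, determined by its two endpoints, and a path may also consist of a single vertex, in which case its two endpoints coincide; so there are only $\Oh(|V(G)|^2)$ possibilities for this pair. For a fixed candidate pair $\{v,w\}$ (allowing $v=w$) I would construct an $st$-instance as follows: take $G$, add two fresh vertices $s$ and $t$, and add the edges $\{s,v\}$ and $\{w,t\}$ (hence $\{s,v\}$ and $\{v,t\}$ when $v=w$); call the resulting graph $G'$. I would then run the assumed $st$-algorithm on $(G',s,t,k+2,\ell)$, and answer \yes{} for $(G,k,\ell)$ if and only if at least one of these $\Oh(|V(G)|^2)$ calls answers \yes{}.

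For correctness the key observation is that $s$ is a degree-one vertex whose only neighbor is $v$ and $t$ is a degree-one vertex whose only neighbor is $w$; consequently the $st$-paths in $G'$ are exactly the paths of the form $s,v,\dots,w,t$, where $v,\dots,w$ is a $v$-$w$ path in $G$ (when $v=w$ the only such path is $s,v,t$). Writing $P'$ for such an $st$-path and $P$ for the corresponding $v$-$w$ path in $G$, I would check that $|V(P')|=|V(P)|+2$ and that $N_{G'}(V(P'))=N_G(V(P))$ exactly: indeed $s$ and $t$ lie on $P'$ and have no neighbor outside $V(P')$, so no vertex of $V(G)\setminus V(P)$ gains or loses adjacency to the path. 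Hence $(G',s,t,k+2,\ell)$ is a \yes-instance of the $st$-variant if and only if $G$ contains a path with endpoints $v$ and $w$ meeting the length constraint ($\le k$ resp.\ $\ge k$) and the neighborhood constraint ($\le\ell$ resp.\ $\ge\ell$). Since $k\ge 1$ in \gprobTsc{}, we have $k+2\ge 2$, so the produced $st$-instance is well-formed. Ranging over all pairs $\{v,w\}$, including $v=w$, exhausts all candidate endpoint pairs (in particular all single-vertex paths), so the disjunction of the $\Oh(|V(G)|^2)$ answers correctly decides $(G,k,\ell)$.

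For the running time, each constructed graph $G'$ has $|V(G)|+2$ vertices and just two extra edges, so each $st$-instance has size linear in $|G|$ and parameters $k+2$ and $\ell$; each call therefore costs $f$ evaluated on an instance of size $\Oh(|G|)$ with parameter value $k+\Oh(1)$ in the length bound, which I absorb into the expression $f(|\I|,k,\ell)$ (treating $f$ as nondecreasing and absorbing the additive constants in the instance size and in $k$). Building each instance and extracting each answer takes $\Oh(|G|)$ time, so the total running time is $\Oh(|V(G)|^2)\cdot f(|\I|,k,\ell) + \Oh(|V(G)|^3)$, which is $\Oh(|V(G)|^2\,f(|\I|,k,\ell))$ as claimed (since $f$ is at least linear in the instance size).

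I expect no genuine obstacle: this is essentially the construction of \cref{lem:manyoneredtost} ``sequentialized'' over endpoint pairs rather than packed into disjoint copies, which is exactly why it yields a Turing reduction with an $\Oh(|V(G)|^2)$ blow-up instead of a many-one reduction with an $\ell$-blow-up. The only two points needing a line of care are the degenerate case $v=w$, so that single-vertex solution paths (relevant when $k=1$) are not missed, and the bookkeeping that the open neighborhood is preserved exactly — not merely up to an additive constant — because $s$ and $t$ are pendants attached precisely at the two path endpoints.
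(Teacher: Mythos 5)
Your proposal is correct and follows essentially the same approach as the paper: a Turing reduction that tries all $\Oh(|V(G)|^2)$ candidate endpoint pairs and calls the assumed $st$-algorithm once per pair. The only difference is cosmetic---the paper simply designates two existing vertices of $G$ as $s$ and $t$ (keeping $k$ and $\ell$ unchanged and restricting to $k>1$), whereas you attach fresh pendant vertices at the candidate endpoints and shift $k$ to $k+2$; your variant has the minor advantage of also covering single-vertex solution paths via the degenerate case $v=w$.
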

\begin{proof}
  Let~$\I:=(G,k,\ell)$ be a non-trivial instance with~$k>1$.
  We can test for each candidate pair for~$s$ and~$t$, that is, we decide the~$st$-variant on instance~$(G',s,t,k,\ell)$ for every~$\{s,t\}\in\binom{V(G)}{2}$, where~$G'$ denotes a copy of~$G$.
  Observe that~$\I$ is a \yes-instance if and only if there is at least one~$\{s,t\}\in\binom{V(G)}{2}$ such that $(G',s,t,k,\ell)$ is a \yes-instance.
\end{proof}
Due to~\cref{obs:turredtost} we obtain the following.

\begin{corollary}
  \label{cor:FPTtransfer}
 \gprobTsc{} is in FPT when parameterized by~$k$ and/or by~$\ell$ whenever its~$st$-variant is in~FPT when parameterized by~$k$ and/or by~$\ell$, respectively.
\end{corollary}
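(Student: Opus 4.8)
The plan is to obtain this as an immediate consequence of \cref{obs:turredtost}. First I would unfold the definition of fixed-parameter tractability for the hypothesis: if $st$-\gprobTsc{} parameterized by~$k$ is in~FPT, then there is a computable function~$g$ and a constant~$c$ such that every instance~$\I$ of the $st$-variant can be decided in $g(k)\cdot|\I|^{c}$ time. Substituting $f(|\I|,k,\ell)=g(k)\cdot|\I|^{c}$ into the conclusion of \cref{obs:turredtost} shows that an instance $(G,k,\ell)$ of \gprobTsc{} can be decided in $\Oh(|V(G)|^{2}\cdot g(k)\cdot|\I|^{c})$ time.

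Second, I would verify that this running time is still of FPT form in the parameter~$k$. The instance $\I=(G',s,t,k,\ell)$ constructed inside the proof of \cref{obs:turredtost} has~$G'$ a copy of~$G$ (plus at most a couple of auxiliary vertices/edges), so $|\I|=\Oh(|V(G)|+|E(G)|+\log k+\log\ell)$, which is polynomial in the size of the original instance~$(G,k,\ell)$. Therefore $\Oh(|V(G)|^{2}\cdot g(k)\cdot|\I|^{c}) = g(k)\cdot|(G,k,\ell)|^{\Oh(1)}$, and the extra quadratic factor from enumerating the $\binom{V(G)}{2}$ candidate pairs $\{s,t\}$ is absorbed into the polynomial part. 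Hence membership in FPT transfers for the parameter~$k$.

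Third, I would observe that the argument is parameter-agnostic: replacing $g(k)$ by $g(\ell)$ gives the statement for the parameter~$\ell$, and replacing it by $g(k+\ell)$ gives the statement for the combined parameter~$k+\ell$; in every case the $|V(G)|^{2}$ overhead and the polynomial dependence on~$|\I|$ merge into the polynomial factor of the FPT bound. I do not expect a real obstacle here—the only point that needs a line of care is confirming that~$|\I|$, the size of the $st$-instance handed to the assumed algorithm, is polynomially bounded in~$|(G,k,\ell)|$, so that the $\Oh(\cdot)$ factors supplied by \cref{obs:turredtost} do not push the procedure out of FPT. The remainder is bookkeeping against the definition of fixed-parameter tractability.
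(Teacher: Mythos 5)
Your proposal is correct and follows exactly the route the paper intends: the corollary is stated as an immediate consequence of \cref{obs:turredtost}, and your unfolding of the FPT definition, substitution of $f(|\I|,k,\ell)=g(\cdot)\cdot|\I|^{c}$, and check that the $st$-instance size and the $|V(G)|^{2}$ enumeration overhead stay polynomial in $|(G,k,\ell)|$ is precisely the bookkeeping the paper leaves implicit. (Minor remark: in the proof of \cref{obs:turredtost} the graph $G'$ is just a copy of $G$ with $s,t$ chosen among its existing vertices, so no auxiliary vertices are even added, which only strengthens your size bound.)
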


Due to~\cref{cor:whardnesstransfer}, we prove negative results for the general versions.
Due to~\cref{cor:FPTtransfer}, we prove positive results for the $st$-variants.

\subsection{Upper Bounds}
\label{ssec:upperbounds}

In this section, we prove $st$-\sspAcr{} and~$st$-\supAcr{} to be in \FPT{} when parameterized by~$k+\ell$, and obtain from these results containment in~$\XP$ regarding~$k$.
To this end, we prove the following.

\begin{proposition}
 \label{prop:degreetok}
 Let~$G=(R\uplus B,E)$ be a graph with two distinct vertices~$s,t\in B$,~$k\geq 2,l\geq 0$ be two integers, and~$\Delta_B:=\Delta(G[B])$.
 Then, in time~$\Oh(\Delta_B^{k})\cdot n^{\Oh(1)}$, we can decide whether there is a~$k$-short $\ell$-secluded or $\ell$-unsecluded~$st$-path only containing vertices in~$B$.
\end{proposition}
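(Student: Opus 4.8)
The plan is to exploit that a path of the desired kind lives entirely inside the induced subgraph~$G[B]$, whose maximum degree is~$\Delta_B$ by definition; the vertices of~$R$ only influence the size of the neighborhood, never the structure of the path. So I would enumerate all candidate paths by a bounded-depth branching search rooted at~$s$ and then test the (un)seclusion condition on each. Concretely, grow a simple path one vertex at a time: if the current partial path ends at a vertex~$v$ and has fewer than~$k$ vertices, branch over every neighbor of~$v$ in~$G[B]$ not already on the partial path and append one of them; whenever the partial path reaches~$t$, record it as a candidate and stop extending it (in an $st$-path, $t$ occurs only as an endpoint). The resulting search tree has depth at most~$k-1$ and branching factor at most~$\Delta_B$ (at most~$\Delta_B-1$ after the first step), hence~$\Oh(\Delta_B^{k})$ nodes, and the set of recorded candidates is exactly the set of simple $st$-paths contained in~$G[B]$ on at most~$k$ vertices.

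For the testing step, for each recorded candidate~$P$ I would compute~$N:=|N_G(V(P))|$ in~$n^{\Oh(1)}$ time---crucially taking the neighborhood in the \emph{full} graph~$G$, so that vertices of~$R$ adjacent to~$V(P)$ are counted. For the $\ell$-secluded variant, accept iff some candidate has~$N\leq \ell$; for the $\ell$-unsecluded variant, accept iff some candidate has~$N\geq \ell$. Correctness is immediate: every $k$-short $st$-path using only vertices of~$B$ is a simple path on at most~$k$ vertices in~$G[B]$ and is therefore among the recorded candidates; conversely every recorded candidate is such a path, and the (un)seclusion condition is verified directly on it. The running time is~$\Oh(\Delta_B^{k})$ candidates times~$n^{\Oh(1)}$ per candidate, i.e.\ $\Oh(\Delta_B^{k})\cdot n^{\Oh(1)}$ overall.

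I do not expect a serious obstacle here; the one point that needs care is precisely that the branching factor is governed by~$\Delta_B=\Delta(G[B])$ and not by~$\Delta(G)$---this is what makes the bound useful when $B$-vertices have small degree inside~$B$ even though~$G$ itself may have large maximum degree---while the neighborhood constraint is nonetheless evaluated over all of~$G$. (In the intended application one has~$\Delta_B\leq n$, so~$\Oh(\Delta_B^{k})\cdot n^{\Oh(1)}=n^{\Oh(k)}$, giving containment in~$\XP$ with respect to~$k$.)
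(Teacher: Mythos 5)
Your proposal is correct and follows essentially the same approach as the paper: a depth-bounded branching search from~$s$ over neighbors within~$G[B]$ (giving~$\Oh(\Delta_B^{k})$ candidates), followed by a polynomial-time check of the neighborhood condition in the full graph~$G$ for each candidate. The only cosmetic difference is that you enforce simplicity during the branching while the paper defers that check to the final verification of each root-leaf path.
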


\begin{proof}
 Construct the following branching tree~$T$ of height at most~$k-1$ with labeling function~$\kappa:V(T)\to B$ and root~$r$ and~$\kappa(r)=s$.
 Each node~$\alpha$ in~$T$ of depth at most~$k-2$ labeled with a vertex~$v\in B\setminus\{t\}$ has~$\deg_{G[B]}(v)$ children such that there is a bijection between the children and~$N_G(v)\cap B$.
 Each node labeled with vertex~$t$ has no children.
 Observe that~$T$ has at most~$\Delta_B^{k}$ vertices. 
 Check in polynomial time for each root-leaf path~$P=(\{\alpha_i\mid 1\leq i\leq q\},\{\{\alpha_i,\alpha_{i+1}\}\mid 1\leq i\leq q-1\})$, where $2\leq q\leq k$ and~$r=\alpha_1$, 
 whether the graph~$P'=(\bigcup_{ 1\leq i\leq q}\{\kappa(\alpha_i)\},\bigcup_{1\leq i\leq q-1}\{\{\kappa(\alpha_i),\kappa(\alpha_{i+1})\}\})$,
 obtained from the labeling of~$P$,
 forms a~$k$-short $\ell$-secluded or $\ell$-unsecluded path in~$G$.
\end{proof}

Intuitively, we prove for $st$-\sspAcr{} and~$st$-\supAcr{} that we can partition the vertex set of the input graph such that the set containing~$s$ and~$t$ has maximum degree upper bounded in~$k+\ell$ and~$\ell+1$, respectively, and apply~\cref{prop:degreetok} consequently.

\begin{theorem}
  \label{cor:sspFPTkell}
 $st$-\sspAcr{} can be solved in~$\Oh((k+\ell)^{k})\cdot n^{\Oh(1)}$~time and hence is in~\FPT{} when parameterized by~$k+\ell$.
\end{theorem}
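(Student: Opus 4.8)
The plan is to apply \cref{prop:degreetok} after showing that, when searching for a $k$-short $\ell$-secluded $st$-path $P$, it suffices to look inside an induced subgraph $G[B]$ whose restriction to $B$ has maximum degree bounded in $k+\ell$. The point is that any solution path $P$ satisfies $|V(P)|\le k$ and $|N_G(V(P))|\le \ell$, hence $|V(P)\cup N_G(V(P))|\le k+\ell$; in particular every vertex of $P$ has all but at most $k+\ell$ of its neighbors inside $V(P)\cup N_G(V(P))$ is false in general, so the real observation must be the following: a vertex $v$ of degree more than $k+\ell-1$ cannot lie on any solution path, because its neighbors would all have to be in $V(P)\cup N_G(V(P))$, which has size at most $k+\ell$, and $v$ itself is one of those $k+\ell$ vertices, forcing $\deg_G(v)\le k+\ell-1$. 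So first I would let $B:=\{v\in V(G) : \deg_G(v)\le k+\ell-1\}\cup\{s,t\}$ and $R:=V(G)\setminus B$.

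Next I would argue correctness of restricting to $B$. If $G$ has a $k$-short $\ell$-secluded $st$-path $P$, then by the degree bound above every internal vertex of $P$ lies in $B$, and $s,t\in B$ by construction, so $V(P)\subseteq B$ and $P$ is a path in $G[B]$ that is still $\ell$-secluded \emph{in $G$}. Conversely any $k$-short $\ell$-secluded $st$-path contained in $B$ is a solution for the original instance, since seclusion is measured in $G$ in both cases; note that \cref{prop:degreetok} is stated for exactly this setting — it decides whether there is such a path ``only containing vertices in $B$'' with seclusion measured in $G$. Then I would take $s,t$ as the distinguished vertices and observe $\Delta_B=\Delta(G[B])\le k+\ell-1$ (adding $s$ and $t$ back in can only have been necessary if their degree was large, but their degree in $G[B]$ is still at most $n-1$; I need to be slightly more careful here — see the obstacle below). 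Applying \cref{prop:degreetok} with this $B$ then yields running time $\Oh(\Delta_B^{\,k})\cdot n^{\Oh(1)} = \Oh((k+\ell)^{k})\cdot n^{\Oh(1)}$, which is the claimed bound, and membership in $\FPT$ parameterized by $k+\ell$ is immediate since $(k+\ell)^k \le (k+\ell)^{k+\ell}$ is a function of $k+\ell$ alone.

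The main obstacle is the handling of $s$ and $t$: if $\deg_G(s)$ or $\deg_G(t)$ exceeds $k+\ell-1$, then forcing them into $B$ could make $\Delta(G[B])$ large and break the bound. The fix is to note that $s$ and $t$ are the endpoints of $P$, so they contribute only one path-neighbor each, and only their at-most-one neighbor on $P$ plus their neighbors in $N_G(V(P))$ need be accounted for — but that still only bounds $|N_G(s)\cap(V(P)\cup N_G(V(P)))|$. The clean resolution: replace $s$ (and $t$) by the subtree argument of \cref{prop:degreetok} directly — the branching tree there branches on $N_G(v)\cap B$ only for the $\deg_{G[B]}(v)$ many options, and at the root $r$ labeled $s$ it may branch on all of $N_G(s)\cap B$, which could be up to $n$; so instead I would restrict the root's children to $N_G(s)\cap B$ but observe that any useful second vertex of $P$ must itself be in $B$ and have degree $\le k+\ell-1$, so after one branching step (cost $\le n$, absorbed into $n^{\Oh(1)}$) we are at a vertex of bounded degree, and similarly the last step into $t$ costs at most one more factor of $n$. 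Thus the height-$k$ tree has at most $n^2\cdot(k+\ell-1)^{k-2}$ leaves, still $\Oh((k+\ell)^k)\cdot n^{\Oh(1)}$. I would phrase this as: form $B$ as above, note $\Delta(G[B\setminus\{s,t\}])\le k+\ell-1$, and invoke (a mild strengthening of) \cref{prop:degreetok} where only the root and the final step pay an extra $n$ factor.
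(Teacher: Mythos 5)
Your proposal follows the paper's proof: discard every vertex of degree at least $k+\ell$ (any vertex $v$ on a $k$-short $\ell$-secluded path $P$ satisfies $N_G(v)\subseteq (V(P)\cup N_G(V(P)))\setminus\{v\}$, hence $\deg_G(v)\le k+\ell-1$), and run \cref{prop:degreetok} on the remaining low-degree part $B$. The only substantive difference is the ``main obstacle'' you raise about $s$ and $t$, and that obstacle is illusory: the degree bound above applies verbatim to the endpoints, since the open neighborhood $N_G(V(P))$ counts \emph{all} non-path neighbors of \emph{all} path vertices, endpoints included. So if $\deg_G(s)\ge k+\ell$ or $\deg_G(t)\ge k+\ell$, no solution exists and you may simply answer no; otherwise $s,t$ land in $B$ for free and \cref{prop:degreetok} applies as stated, with $\Delta(G[B])\le k+\ell$. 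There is no need to force $s,t$ into $B$, no need to bound only their ``path-neighbors,'' and no need for the modified branching tree with extra $n$-factors at the root and the final step. Your workaround is not wrong --- it still yields a correct algorithm within the claimed $\Oh((k+\ell)^{k})\cdot n^{\Oh(1)}$ bound --- but it amends \cref{prop:degreetok} to solve a problem that the seclusion constraint has already solved for you.
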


\begin{proof}
 Let $(G=(V,E),s,t,k,\ell)$ be an instance of $st$-\sspAcr{}.
 Partition~$V=R\uplus B$ such that~$R:=\{v\in V\mid \deg(v)\geq k+\ell+1\}$.
 Clearly, no $k$-short $\ell$-secluded $st$-path can contain any vertex from~$v$.
 Apply~\cref{prop:degreetok} with partition~$R\uplus B$ and~$k,\ell$.
\end{proof}

For~$st$-SUP, tractability also holds true.
\begin{theorem}
  \label{cor:supFPTkell}
 $st$-\supAcr{} can be solved in~$\Oh((\ell+1)^{k})\cdot n^{\Oh(1)}$~time and hence is in~\FPT{} when parameterized by~$k+\ell$.
\end{theorem}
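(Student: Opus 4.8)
The plan is to mimic the proof of \cref{cor:sspFPTkell}: I will partition $V(G)=R\uplus B$ with $s$ and $t$ forced into $B$ and $\Delta(G[B])$ bounded by $\ell+1$, and then invoke \cref{prop:degreetok} for the $\ell$-unsecluded variant. Concretely I would set $B:=\{v\in V(G)\mid \deg_G(v)\le \ell+1\}$ and $R:=V(G)\setminus B$, so that $\Delta(G[B])\le\ell+1$. Unlike in the secluded case, a $k$-short $\ell$-unsecluded $st$-path has no reason to avoid high-degree vertices, so \cref{prop:degreetok} on the partition $R\uplus B$ only finds solutions that stay inside $B$; the solutions that touch $R$ will be caught by a separate polynomial-time routine, exploiting that a high-degree vertex lying on a \emph{shortest} $st$-path through it already forces $\ell$-unsecludedness.

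The key step, which I would establish first, is the following equivalence: $(G,s,t,k,\ell)$ is a \yes-instance of $st$-\supAcr{} if and only if (i)~$s,t\in B$ and $G$ has a $k$-short $\ell$-unsecluded $st$-path using only vertices of~$B$, or (ii)~some vertex $v\in R$ lies on a simple $st$-path with at most $k$ vertices. The ``only if'' direction is immediate: a solution path either stays inside $B$---and then $s,t\in B$, as they lie on it---giving~(i), or it uses some $v\in R$ and is itself an $st$-path through $v$ with at most $k$ vertices, giving~(ii). For ``if'', (i) trivially yields a solution, while for (ii) I would take $v\in R$ together with a simple $st$-path $P$ through $v$ having the fewest vertices; a shortcutting argument shows $P$ has no chord incident to $v$: if $v$ were adjacent to a vertex $u$ of $P$ that is not a path-neighbour of $v$, then replacing the $v$--$u$ subpath of $P$ by the edge $\{v,u\}$ would give a shorter simple $st$-path still through $v$, contradicting minimality. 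Hence at most two vertices of $P$ lie in $N_G(v)$, so $|N_G(V(P))|\ge \deg_G(v)-2\ge(\ell+2)-2=\ell$, i.e.\ $P$ is a $k$-short $\ell$-unsecluded $st$-path. (The same argument covers $v\in\{s,t\}$, where only one neighbour of $v$ lies on $P$ and (ii) simply asks whether $\dist_G(s,t)\le k-1$.)

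Given the equivalence, the algorithm has two parts. Condition~(i) is decided by \cref{prop:degreetok} applied to $G$ with the partition $R\uplus B$ and the integers $k,\ell$, asking for the $\ell$-unsecluded case; this runs in $\Oh(\Delta(G[B])^{k})\cdot n^{\Oh(1)}=\Oh((\ell+1)^{k})\cdot n^{\Oh(1)}$ time. Condition~(ii) is decided in polynomial time: for each of the at most $n$ vertices $v\in R$, I compute the minimum number of vertices on a simple $st$-path through $v$ and test whether it is at most $k$. This number equals one plus the cost of a cheapest pair of internally vertex-disjoint $s$--$v$ and $v$--$t$ paths, which is computable by a single minimum-cost flow of value two after the standard vertex-splitting that encodes vertex-disjointness as unit arc capacities. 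Returning \yes{} if and only if one of the two parts succeeds yields total running time $\Oh((\ell+1)^{k})\cdot n^{\Oh(1)}$, and hence fixed-parameter tractability when parameterized by $k+\ell$.

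I expect the main obstacle to be the correctness of the reduction to (i)$\vee$(ii)---in particular the shortcutting argument showing that a shortest $st$-path through a vertex $v\in R$ is automatically $\ell$-unsecluded, together with the bookkeeping needed to ensure the degenerate cases ($s$ or $t$ of degree at least $\ell+2$, hence lying in $R$) are genuinely subsumed by~(ii) rather than by~(i). Everything else is the bounded-degree branching already packaged in \cref{prop:degreetok}, plus a textbook flow computation.
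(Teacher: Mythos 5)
Your proposal is correct and follows essentially the same route as the paper's proof: the same degree threshold $R=\{v\mid\deg_G(v)\ge\ell+2\}$, the same observation that a minimum-vertex $st$-path through a vertex of $R$ is automatically $\ell$-unsecluded (which you justify via the shortcutting argument the paper leaves implicit), the same minimum-cost-flow test for the existence of a $k$-short $st$-path through a given vertex, and the same invocation of \cref{prop:degreetok} on $G[B]$ with $\Delta(G[B])\le\ell+1$. Your explicit handling of the degenerate case $s$ or $t\in R$ is a welcome bit of extra care, but it does not change the argument.
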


\begin{proof}
 Let~$\I=(G=(V,E),s,t,k,\ell)$ be an arbitrary but fixed input instance to~$st$-\supAcr{}.
 Our FPT-algorithm consists of two phases.
 
 In the first phase, compute the vertex set~$R=\{v\in V\mid \deg(v)\geq \ell+2\}$.
 For each~$v\in R$, do the following.
 Check whether there is a $k$-short $st$-path containing~$v$.
 If yes, then return \yes{} as~$\I$ is a \yes-instance: 
 There is a $k$-short $st$-path (of minimal length) containing~$v$ having at least~$\ell$ neighbors.
 We can check whether there is a $k$-short $st$-path containing~$v$ in polynomial time, by solving the following minimum-cost flow problem.
 
 Construct a directed graph~$D$ as follows.
 Let~$D$ be initially empty.
 First, add a source vertex~$\sigma$ and a sink vertex~$\tau$.
 Next, for each vertex~$w\in V$, add two vertices~$w_{+}$ and~$w_{-}$, as well as the arc~$(w_+, w_-)$ and set the cost and capacity to one.
 For each~$\{u,w\}\in E$, add the two arcs~$(u_-,w_+)$ and~$(w_-,u_+)$, and set for each the cost to zero and the capacity to one.
 Next, add the arcs~$(s_-,\tau)$ and~$(t_-,\tau)$ with cost zero and capacity one.
 Finally, add the arc~$(\sigma,v_-)$ with cost zero and capacity two.
 We denote by~$W$ the set of vertices and by~$A$ the set of arcs of~$D$.
 We claim that~$D$ admits a flow of value two with cost at most~$k-1$ if and only if there is a $k$-secluded $st$-path containing~$v$ in~$G$.
 Note that minimum-cost flow can be solved in polynomial time through e.g.~linear programming.
 
 \RD{}
 Let $D$ admit a flow~$f$ of value two with cost at most~$k-1$.
 As all capacities are integral, we can assume that~$f$ is integral.
 Let~$F=\{(w_+,w_-)\in A\mid f((w_+,w_-))=1\}$.
 Observe that~$|F|\leq k-1$.
 We claim that~$U=\{u\in V\mid (u_+,u_-)\in F\}\cup\{v\}$ forms a $k$-short~$st$-path~$P$ in~$G$ containing~$v$.
 Note that since~$f(a)\in\{0,1\}$ for all~$a\in A\setminus\{(\sigma,v_-)\}$, we can derive from~$f$ a~$v$-$s$ path~$P_s$ on the one hand, and a~$v$-$t$ path~$P_t$ on the other hand.
 Observe that by construction of~$D$,~$P_s$, and~$P_t$ are vertex-disjoint.
 It follows that~$P$ is an~$st$-path~$P$ in~$G$ containing~$v$.
 Finally, as~$|U|=|F|+1\leq k$, we have that~$P$ is also~$k$-short.
 
 \LD{}
 Let~$P$ be a $k$-secluded $st$-path containing~$v$ in~$G$.
 We denote~$V(P)=\{u^1,\ldots,u_{k'}\}$ and~$E(P)=\{\{u^i,u^{i+1}\}\mid 1\leq i<k'\}$, where~$u_1=s$,~$u_{k'}=t$ and~$k'\leq k$.
 Note that there is some index~$x\in[k']$ with~$u^x=v$.
 We construct a function~$f:A\to\{0,1,2\}$ as follows.
 Set~$f((\sigma,v_-)):=2$, $f((s_-,\tau)):=1$, and~$f((t_-,\sigma)):=1$.
 Finally, set
 \[ f((u,u')) :=\begin{cases} 
      1,& \text{if }\exists j\in[k']\setminus\{x\}: (u,u')=(u^j_+,u^j_-) \text{ or}\\
       & \exists j\in\{x,\ldots,k'-1\}: (u,u')=(u^j_-,u^{j+1}_+) \text{ or}\\
       & \exists j\in\{2,\ldots,x\}: (u,u')=(u^j_-,u^{j-1}_+),\\
      0,& \text{otherwise}. 
    \end{cases}
 \]
 Clearly,~$f$ is a~$\sigma$-$\tau$ flow of value two.
 As~$|V(P)|\leq k'\leq k$ and $f$ assigns one to exactly~$k'-1$ arcs of cost one each,~$f$ has cost at most~$k-1$.

 In the second phase, 
 apply~\cref{prop:degreetok} with partition~$R\uplus B$, where~$B:=V\setminus R$ and~$\Delta(G[B])\leq \ell+1$.
\end{proof}

From~\cref{cor:sspFPTkell,cor:supFPTkell}, we immediately obtain the following.

\begin{corollary}
 \label{cor:xpwrtk}
 $st$-\sspAcr{} and $st$-\supAcr{} are contained in~\XP{} when parameterized by~$k$.
\end{corollary}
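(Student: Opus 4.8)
The plan is to derive Corollary~\ref{cor:xpwrtk} as an immediate consequence of the two FPT results just established, observing that the combined parameter~$k+\ell$ dominates~$k$ only when~$\ell$ is bounded, so the subtlety is to eliminate the dependence on~$\ell$ entirely. First I would recall that by~\cref{cor:sspFPTkell} the problem $st$-\sspAcr{} is solvable in~$\Oh((k+\ell)^k)\cdot n^{\Oh(1)}$ time and by~\cref{cor:supFPTkell} the problem $st$-\supAcr{} is solvable in~$\Oh((\ell+1)^k)\cdot n^{\Oh(1)}$ time; neither bound is of the form~$g(k)\cdot n^{\Oh(1)}$ needed for membership in~\XP{} with respect to~$k$ alone, since~$\ell$ can be as large as~$\Theta(n^2)$.

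The key step is then to bound~$\ell$ by a function of~$k$ in both problems without changing the answer. For the ``short'' variants a $k$-short path has at most~$k$ vertices, each of degree at most~$n-1$, so its open neighborhood has size at most~$k(n-1)$; more to the point, if we are only required to find \emph{some} valid path and~$\ell$ exceeds the number of vertices not on the path that could ever be adjacent to a $k$-short path, the instance is trivially a \no-instance. Concretely, for $st$-\sspAcr{} we may assume~$\ell\leq kn$ (otherwise the secludedness constraint~$|N_G(V(P))|\leq\ell$ is automatically satisfied, so we just search for any $k$-short $st$-path, solvable in polynomial time); plugging~$\ell\leq kn$ into~$\Oh((k+\ell)^k)\cdot n^{\Oh(1)}$ yields~$\Oh((k+kn)^k)\cdot n^{\Oh(1)} = \Oh(k^k)\cdot n^{\Oh(1)}$, which is of the form~$g(k)\cdot n^{\Oh(1)}$ after absorbing the~$n^k$ factor into~$n^{\Oh(1)}$, witnessing~\XP{} membership. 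For $st$-\supAcr{} the argument is even cleaner: either~$\ell\leq kn$, in which case~$\Oh((\ell+1)^k)\cdot n^{\Oh(1)} = \Oh((kn+1)^k)\cdot n^{\Oh(1)} = n^{\Oh(k)}$, or~$\ell > kn$ and no $k$-short path can have~$\geq\ell$ neighbors (a path on at most~$k$ vertices has fewer than~$kn$ neighbors), so the instance is a trivial \no-instance.

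The main obstacle—if there is one—is purely bookkeeping: one must be careful that the threshold used to declare an instance trivial is correct and polynomial-time checkable, and that after substituting the bound on~$\ell$ the resulting time bound genuinely decouples into a computable function of~$k$ times a polynomial in~$n$ (the factor~$n^k$ coming from~$(kn)^k$ being the one to watch, which is fine since~$n^k = n^{\Oh(k)}$ is exactly what~\XP{} permits). I do not expect any genuine difficulty here, which is why the corollary is stated as following ``immediately'' from the two theorems; the proof is therefore a short paragraph invoking~\cref{cor:sspFPTkell} and~\cref{cor:supFPTkell} together with the trivial observation that~$\ell$ may be assumed polynomially bounded in~$n$ (indeed in~$kn$), whence both running times become~$n^{\Oh(k)}$.
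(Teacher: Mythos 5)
Your proposal is correct and matches the paper's (implicit) argument: the corollary follows by bounding $\ell$ polynomially in $n$ and substituting into the running times of \cref{cor:sspFPTkell,cor:supFPTkell} to obtain $n^{\Oh(k)}$, which is exactly \XP{} membership. One sentence of yours is a slip, though: $\Oh((k+kn)^k)\cdot n^{\Oh(1)}$ is \emph{not} "of the form $g(k)\cdot n^{\Oh(1)}$" (that would assert \FPT{} in $k$ alone, contradicting \cref{thm:w1hard}); the factor $n^k$ cannot be absorbed into $n^{\Oh(1)}$, only into $n^{\Oh(k)}$, as you yourself correctly state in your final paragraph.
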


\subsection{Lower Bounds}
\label{ssec:lowerbounds}
In the previous section, we proved~\sspAcr{} to be solvable in~$\Oh(2^{k\log(k+\ell)})\cdot n^{\Oh(1)}$~time (\cref{cor:sspFPTkell}) and \supAcr{} to be solvable in~$\Oh(2^{k\log(\ell+1)})\cdot n^{\Oh(1)}$~time (\cref{cor:supFPTkell}).
Due to the reductions given in \cref{thm:npcomplete}, assuming that the \emph{Exponential Time Hypothesis (ETH)}~\cite{ImpagliazzoPZ01} holds true, we cannot essentially improve the running times for \sspTsc{} and \supTsc{} regarding the parameter~$k+\ell$.

\begin{corollary}
 Unless the ETH breaks, ($st$-)\sspAcr{} and ($st$-)\supAcr{} are not solvable in~$\Oh(2^{o(k+\ell)})n^{\Oh(1)}$-time.
\end{corollary}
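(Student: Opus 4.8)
The plan is to derive this ETH lower bound from the standard ETH-based lower bound for \textsc{Hamiltonian Path}: it is known that, unless the ETH fails, \textsc{Hamiltonian Path} (and indeed \textsc{Planar Cubic Hamiltonian Path}) on an $n$-vertex graph cannot be solved in $\Oh(2^{o(n)})$ time. Concretely, \textsc{PCHP} reduces from \textsc{3-SAT} via a sequence of polynomial-size, linear-parameter-blowup reductions, so a $2^{o(n)}$-time algorithm for \textsc{PCHP} would yield a $2^{o(m)}$-time algorithm for \textsc{3-SAT} with $m$ clauses, contradicting the ETH. I would state this as the starting point and cite~\cite{ImpagliazzoPZ01} (and the sparsification lemma / standard references) for it.

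First I would recall the reductions from the proof of~\cref{thm:npcomplete}: each of them maps an $n$-vertex instance $(G)$ of \textsc{PCHP} to an instance of ($st$-)\sspAcr{} or ($st$-)\supAcr{} in which both $k$ and $\ell$ are bounded \emph{linearly} in $n$ — indeed in parts (a) and (b) we have $k = n$ and $\ell = 0$, so $k+\ell = \Oh(n)$, and in part (c) we have $k = n$, $\ell = 2n$, so again $k+\ell = \Oh(n)$; moreover each reduction produces a graph of size $\Oh(n)$ and runs in polynomial time. The $st$-variants are obtained from the \textsc{PCHC}-based reduction in the subsequent theorem, which likewise has $k = n+2$ and $\ell = c = \Oh(1)$ for ($st$-)\sspAcr{}, and $\ell' = 2n + c = \Oh(n)$ for ($st$-)\supAcr{}; so in all four cases $k + \ell = \Oh(n)$ and the graph has $\Oh(n)$ vertices.

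Then I would argue the contrapositive: suppose ($st$-)\sspAcr{} (resp.\ ($st$-)\supAcr{}) admitted an algorithm running in $\Oh(2^{o(k+\ell)}) \cdot n^{\Oh(1)}$ time. Composing with the relevant reduction above — which produces an instance with $k + \ell = \Oh(n)$ and of size polynomial (in fact linear) in $n$ — we would obtain an algorithm for \textsc{PCHP} (resp.\ \textsc{PCHC}) running in $\Oh(2^{o(n)}) \cdot n^{\Oh(1)} = \Oh(2^{o(n)})$ time, contradicting the ETH-based lower bound for these Hamiltonicity problems. This establishes the claim for all four problems simultaneously.

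The only delicate point — and the step I expect to require the most care — is making sure the reductions really have the \emph{linear} parameter blowup $k + \ell = \Oh(n)$ (as opposed to merely polynomial), and that the ETH lower bound I invoke for \textsc{PCHP}/\textsc{PCHC} is the correct ``no $2^{o(n)}$'' form rather than the weaker ``no $2^{o(\sqrt{n})}$'' form that planarity sometimes forces via grid-minor arguments. Here the relevant reductions (3-SAT $\to$ Hamiltonian Path $\to$ cubic/planar variants) are all known to be linear in the number of variables/clauses, so the $2^{o(n)}$ bound does hold; I would state this explicitly and, if desired, sketch the chain or simply cite it. Everything else is bookkeeping.
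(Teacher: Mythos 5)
Your proposal is correct and follows essentially the same route as the paper: observe that the reductions in \cref{thm:npcomplete} (and the \textsc{PCHC}-based reduction for the $st$-variants) have $k+\ell\in\Oh(n)$ and linear instance size, then invoke the ETH lower bound for \textsc{Hamiltonian Path}. Your explicit verification of the linear parameter blowup and your separate treatment of the $st$-variants is, if anything, slightly more careful than the paper's two-sentence argument.
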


\begin{proof}
 In the many-one reductions given in~\cref{thm:npcomplete}, we have that~$k+\ell\in \Oh(n)$, where~$n$ denotes the number of vertices in the input graph.
 The statement then follows by the fact that~\textsc{Hamiltonian Path} is not solvable in~$\Oh(2^{o(n)})\cdot n^{\Oh(1)}$~time unless the ETH breaks~\cite{CyganFKLMPPS15}.
\end{proof}

Due to~\cref{cor:xpwrtk}, we know that both~\sspAcr{} and \supAcr{} are contained in~\XP{} when parameterized by~$k$.
Our two following results show that containment in~\FPT{} when parameterized by~$k$ only is excluded for~\sspAcr{} (unless~$\FPT=\W{1}$) and for~\supAcr{} (unless~$\FPT=\W{2}$).

\begin{theorem}
 \label{thm:w1hard}
 \sspAcr{} is \W{1}-hard when parameterized by~$k$.
\end{theorem}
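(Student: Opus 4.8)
The plan is to give a parameterized reduction from a standard W[1]-hard problem to \sspAcr{} so that the parameter~$k$ of the constructed instance is bounded by a function of the source parameter. The natural candidate — mirroring the approach of~\citet{FominGKK17} for \textsc{Secluded Path}, which the authors explicitly say they build on — is \textsc{Multicolored Clique} (equivalently \textsc{Multicolored Independent Set}): given a graph~$H$ whose vertex set is partitioned into~$h$ color classes~$V_1,\dots,V_h$, decide whether~$H$ contains a clique with exactly one vertex in each class. This problem is W[1]-hard with respect to~$h$, and the target path length will be a linear function of~$h$, which is exactly what is needed.

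The construction I would carry out: create a long ``spine'' path that the solution path is forced to traverse, built from~$h$ consecutive selection gadgets, one per color class. The $i$-th gadget offers, between two consecutive spine checkpoints, a parallel bundle of internally-short routes, one route per vertex of~$V_i$; choosing a route corresponds to selecting that vertex into the clique. The key point is to keep the \emph{number of vertices on the path} bounded by~$O(h)$ (so that~$k = O(h)$), which forces each selection route to have constant length; therefore the ``information'' of which vertex was selected cannot be stored along the path itself and must instead be encoded in the \emph{open neighborhood}. So I attach to each vertex-route a private set of pendant ``witness'' vertices (neighbors of the route but not on any spine), and I budget~$\ell$ so tightly that the path may afford the pendant neighbors of exactly one chosen route per class plus a fixed overhead, and — crucially — so that two selected routes from classes~$i$ and~$j$ corresponding to \emph{non-adjacent} vertices of~$H$ would force one extra neighbor (an ``edge-checking'' vertex adjacent to precisely those two routes), exceeding~$\ell$. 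Thus an $\ell$-secluded path of length~$\le k$ exists iff the~$h$ selected vertices are pairwise adjacent, i.e. form a multicolored clique. Setting~$k := ch$ and~$\ell := c'h$ for suitable constants~$c,c'$ makes this a parameterized reduction with respect to~$k$.

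In more detail, the forward direction is routine: given a multicolored clique, route the path through the corresponding selection gadgets; its vertex count is the fixed spine plus~$h$ constant-length routes~$= O(h) \le k$, and its open neighborhood consists of the pendant witnesses of the~$h$ chosen routes plus fixed overhead, and no edge-checker gets activated because all pairs are adjacent, so the neighborhood has size~$\le \ell$. The backward direction is where the argument must be careful: I must show that any $k$-short $\ell$-secluded $st$-path is forced to have the ``canonical'' shape — it cannot skip a gadget, cannot use two routes within one gadget, cannot shortcut across gadgets — and I obtain this by making the spine a true cut structure (e.g. each checkpoint is a cut vertex or a small cut whose removal separates the gadgets) together with the degree/neighborhood budget, so that deviating from canonical form either disconnects~$s$ from~$t$, or blows past~$k$, or blows past~$\ell$. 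Once canonicity is established, the~$\ell$-budget argument pins down exactly one vertex per class and the edge-checkers force pairwise adjacency.

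The main obstacle I anticipate is precisely this rigidity/canonicity lemma for the backward direction: ensuring that the open-neighborhood budget cannot be ``cheated'' by a path that, say, bundles several routes, or loops through a gadget, or shares pendant neighbors cleverly, thereby satisfying~$|V(P)|\le k$ and~$|N_G(V(P))|\le\ell$ without corresponding to a clique. Getting the pendant-set sizes, the edge-checker attachments, and the constants~$c,c'$ simultaneously consistent — so that the counting is tight in exactly one direction and slack in the other — is the delicate part; I would set these up as explicit invariants on the gadget (number of forced neighbors contributed by each class, extra neighbor per missing edge) and verify the reduction against those invariants. A secondary point worth noting, but easy, is that the reduction must be polynomial-time and that the planarity/degree restrictions of~\cref{thm:npcomplete} are \emph{not} claimed here, so I am free to use high-degree hub vertices and non-planar edge-checking gadgets, which considerably simplifies the design.
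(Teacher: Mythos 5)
Your high-level strategy---a parameterized reduction from a clique-type W[1]-hard problem in which the path is too short to carry the selection information, so the selection must be read off the open neighborhood---is sound and matches the spirit of both the paper and \citet{FominGKK17}. The paper's actual reduction is simpler than your design: it reduces directly from \textsc{Clique}, builds one vertex per edge of the input graph, turns that set $E'$ into a clique (plus a large clique $C$ attached to $V'$ that makes touching $V'\cup C$ prohibitively expensive), and sets $k'=\binom{k}{2}$ and $\ell=|E|-\binom{k}{2}+k$; since any $\binom{k}{2}$ edges span at least $k$ endpoints, with equality exactly for a $k$-clique, the neighborhood budget in $V'$ forces a clique. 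Your proposal, however, has a genuine gap at its core mechanism: an ``edge-checking'' vertex $x$ adjacent to the routes of two non-adjacent vertices $u\in V_i$ and $v\in V_j$ lies in the \emph{open} neighborhood $N_G(V(P))$ as soon as \emph{either} route is on $P$, so it cannot act as a penalty that fires only when both are selected. Worse, if both routes are on $P$, $x$ is counted once rather than twice, so selecting a non-adjacent pair \emph{lowers} $|N_G(V(P))|$ compared to an adjacent pair---the opposite of what an upper bound $\ell$ requires. The workable version inverts the counting: after regularizing degrees, make \emph{adjacent} selections share neighbors so that only cliques meet the budget; this shared-neighbor saving is exactly the double-counting the paper exploits through shared endpoints of its edge-vertices.

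Two further issues. First, \cref{thm:w1hard} is about \sspAcr{}, not $st$-\sspAcr{}; \cref{cor:whardnesstransfer} transfers hardness from the former to the latter, not conversely. Your spine-with-terminals construction proves (at best) hardness of the $st$-variant unless you additionally show that a $k$-short $\ell$-secluded path cannot simply hide inside one gadget and ignore the spine entirely---which is precisely the ``canonicity'' lemma you acknowledge but do not supply, and which is harder without designated endpoints. Second, insisting on $\ell=O(h)$ is unnecessary (only $k$ is the parameter) and needlessly constrains the gadget sizes; the paper's reduction uses $\ell=\Theta(|E|)$ without any harm.
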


In the following proof, we consider the \textsc{Clique} problem: Given an undirected graph~$G$ and an integer~$k\in\N$, decide whether~$G$ contains a~$k$-clique, where a~$k$-clique is a graph on at least~$k$ vertices such that each pair of vertices is adjacent.
\textsc{Clique} parameterized by the solution size~$k$ is a classical \W{1}-complete problem~\cite{DowneyF13,DowneyFVW99}.

\begin{proof}
  Let $(G=(V,E),k)$ be an instance of \textsc{Clique}. 
  In polynomial time, we construct the instance $(G',k',\ell)$ of \sspAcr{} with~$k'=\binom{k}{2}$ as follows (see~\cref{fig:w1hard} for an illustration).
 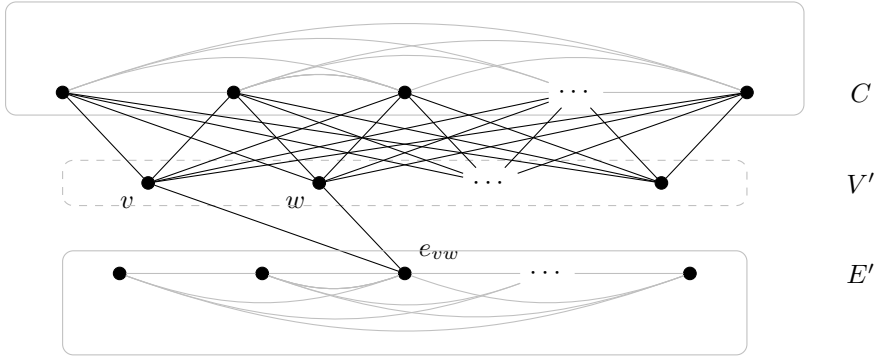
\begin{figure}[t]
  \centering
    \begin{tikzpicture}[scale=1.2]

      \tikzstyle{xn}=[fill, circle, scale=1/2, draw]
      \def\x{1.25}
      \def\y{1}

      \node (a1) at (0*\x,0*\y)[xn]{};
      \node (a2) at (1.5*\x,0*\y)[xn]{};
      \node (a3) at (3*\x,0*\y)[xn]{};
      \node (a4) at (4.5*\x,0*\y)[]{$\cdots$};
      \node (a5) at (6*\x,0*\y)[xn]{};

      \node (lbl) at (7*\x,0*\y)[]{$C$};

      \def\in{160}
      \def\out{20}
      \draw[color=lightgray] (a1) -- (a2) -- (a3) -- (a4) -- (a5);
      \draw[color=lightgray] (a1) to [out=\out,in=\in](a3);
      \draw[color=lightgray] (a1) to [out=\out,in=\in](a4);
      \draw[color=lightgray] (a1) to [out=\out,in=\in](a5);
      \draw[color=lightgray] (a2) to [out=\out,in=\in](a3);
      \draw[color=lightgray] (a2) to [out=\out,in=\in](a3);
      \draw[color=lightgray] (a2) to [out=\out,in=\in](a3);
      \draw[color=lightgray] (a2) to [out=\out,in=\in](a4);
      \draw[color=lightgray] (a2) to [out=\out,in=\in](a5);
      \draw[color=lightgray] (a3) to [out=\out,in=\in](a5);
      \draw[rounded corners, lightgray] (-0.5*\x,-0.25*\y) rectangle (6.5*\x,1*\y);

      \def\ysh{1.0}

      \node (b1) at (0.75*\x,0*\y-\ysh*\y)[xn,label=225:{$v$}]{};
      \node (b2) at (2.25*\x,0*\y-\ysh*\y)[xn,label=225:{$w$}]{};
      \node (b3) at (3.75*\x,0*\y-\ysh*\y)[]{$\cdots$};
      \node (b4) at (5.25*\x,0*\y-\ysh*\y)[xn]{};

      \node (lbl) at (7*\x,0*\y-\y*\ysh)[]{$V'$};

      \draw[dashed, rounded corners, lightgray] (-0.0*\x,-0.25*\y-\ysh*\y) rectangle (6.0*\x,0.25*\y-\ysh*\y);

      \foreach \i in {1,...,5}{
	      \foreach \j in {1,...,4}{
		      \draw (a\i) -- (b\j);
	      }
      }

      \node (a1) at (0.5*\x,0*\y-2*\ysh*\y)[xn]{};
      \node (a2) at (1.75*\x,0*\y-2*\ysh*\y)[xn]{};
      \node (a3) at (3*\x,0*\y-2*\ysh*\y)[xn,label=45:{$e_{vw}$}]{};
      \node (a4) at (4.25*\x,0*\y-2*\ysh*\y)[]{$\cdots$};
      \node (a5) at (5.5*\x,0*\y-2*\ysh*\y)[xn]{};

      \node (lbl) at (7*\x,0*\y-2*\y*\ysh)[]{$E'$};

      \draw (b1) -- (a3) -- (b2);

      \def\in{-160}
      \def\out{-20}
      \draw[color=lightgray] (a1) -- (a2) -- (a3) -- (a4) -- (a5);
      \draw[color=lightgray] (a1) to [out=\out,in=\in](a3);
      \draw[color=lightgray] (a1) to [out=\out,in=\in](a4);
      \draw[color=lightgray] (a1) to [out=\out,in=\in](a5);
      \draw[color=lightgray] (a2) to [out=\out,in=\in](a3);
      \draw[color=lightgray] (a2) to [out=\out,in=\in](a3);
      \draw[color=lightgray] (a2) to [out=\out,in=\in](a3);
      \draw[color=lightgray] (a2) to [out=\out,in=\in](a4);
      \draw[color=lightgray] (a2) to [out=\out,in=\in](a5);
      \draw[color=lightgray] (a3) to [out=\out,in=\in](a5);
      \draw[rounded corners, lightgray] (-0.0*\x,0.25*\y-2*\ysh*\y) rectangle (6*\x,-0.9*\y-2*\ysh*\y);
    \end{tikzpicture}
  \caption{Graph~$G'$ in the construction in the proof of~\cref{thm:w1hard}.
  Each of the sets~$C$ and~$E'$ (enclosed by a solid rectangle) induces a clique.}
  \label{fig:w1hard}
 \end{figure}
 
 \constr{}: Let $G'$ be initially empty.
 We add a copy~$V'$ of $V$ to~$G$ (if $v\in V$, we denote its copy in~$V'$ by $v'$).
 Moreover, for each $e\in E$, we add the vertex $v_e$ to~$G'$ (denote the vertex set by~$E'$).
 If $e=\{v,w\}\in E$, then we add the edges $\{v_e,v'\}$ and $\{v_e,w'\}$ to~$G'$.
 Next, add the vertex set~$C$ consisting of~$|E|+k+1$ vertices to~$G'$, and make~$C$ a clique.
 Finally, make~$E'$ a clique.
 Set $k'=\binom{k}{2}$ and $\ell=|E|-k'+k$.
 This finishes the construction.
 
 \corr{}:
 We prove that $G$ contains a $k$-clique if and only if $G'$ admits a $k'$-short $\ell$-secluded path.

 \RD{}
 Let~$G$ contain a $k$-clique~$G[K]$ with~$|K|=k$ and edge set~$F\subseteq E$.
 Denote by~$K'$ and $F'$ the vertices in~$V'$ and~$E'$ corresponding to~$K$ and $F$, respectively.
 Then construct the $k'$-short $\ell$-secluded path~$P$ as follows.
 Let~$F'=\{e_1,\ldots,e_{|F'|}\}$ be an arbitrary enumeration of the vertices in~$F'$. 
 Construct~$P$ with vertex set~$F'$ and edge set~$\{\{e_i,e_{i+1}\mid 1\leq i\leq |F'|-1\}$.
 Recall that~$E'$ forms a clique, and hence~$P$ can be constructed this way.
 Note that~$P$ contains~$k'=|F'|$ vertices.
 The neighborhood~$N_{G'}(P)$ of~$P$ contains $|E|-k'$ vertices in~$E'$, and $k$ vertices in~$V'$ (recall that $K$ forms a clique in~$G$).
 Hence, $P$ is a $k'$-short $\ell$-secluded path.
 
 \LD{}
 Let $G'$ admit a $k'$-short $\ell$-secluded path~$P$.
 First, observe that~$P$ contains no vertex in~$V'\cup C$, as otherwise $|N_{G'}(P)|\geq |E|+k+1-k'> \ell$, yielding a contradiction.
 Hence, $P$~only contains vertices in~$E'$.
 As $P$ contains~$k'-q$, $0\leq q\leq k$, vertices and $E'$ forms a clique in~$G'$, $|N_{G'}(P)\cap E'|=|E|-k'+q$.
 Since~$0\leq \ell-|N_{G'}(P)\cap E'|\leq k-q$, it follows that for the set~$K'=N_{G'}(P)\cap V'$ it holds true that~$|K'|\leq k-q$.
 Since any graph with~$k-q$ vertices has at most~$\binom{k-q}{2}$ edges, we have that~$\binom{k}{2}-q=|E'|\leq \binom{k-q}{2}$, which only holds true for~$q=0$.
 It follows that the vertex set~$K$ corresponding to~$K'$ forms a $k$-clique in~$G$.
\end{proof}

\begin{theorem}
  \label{thm:w2hardness}
 \supAcr{} is  \W{2}-hard  when parameterized by~$k$.
\end{theorem}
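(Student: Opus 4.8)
The plan is to give a polynomial-time parameterized many-one reduction from \textsc{Dominating Set} parameterized by the solution size~$k$ --- a classical \W{2}-complete problem~\cite{DowneyF13} --- to \supAcr{} parameterized by~$k$. Starting from a \textsc{Dominating Set} instance $(H,k)$ with $n := |V(H)|$ (we may assume $n\ge 2$ and $1\le k<n$, the other cases being decidable trivially), I would build a graph~$G'$ as follows: add a \emph{selector} clique $V' := \{v' : v\in V(H)\}$ on a copy of~$V(H)$; for every $u\in V(H)$ add a fresh clique~$Q_u$ on $M := k+1$ vertices, with no edges between distinct~$Q_u$'s; and for all $v,u\in V(H)$ with $u\in N_H[v]$ join~$v'$ to every vertex of~$Q_u$. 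Then set $k' := k$ and $\ell := nM + (n-k)$. This is polynomial-time and $k'$ depends only on~$k$.

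For the forward direction, given a dominating set~$D$ of~$H$ with $|D| = d\le k$, the $d$ selectors of~$D$ form a path~$P$ (as~$V'$ is a clique) with $|V(P)| = d\le k'$, and one checks $N_{G'}(V(P)) = (V'\setminus V(P))\cup\bigcup_{u\in N_H[D]}Q_u = (V'\setminus V(P))\cup\bigcup_{u\in V(H)}Q_u$ because~$D$ dominates~$H$; this has size $(n-d)+nM\ge\ell$.

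For the converse, I would take a $k'$-short $\ell$-unsecluded path~$P$, set $a := |V(P)\cap V'|$, $S := \{v : v'\in V(P)\}$, $J := \{u : V(P)\cap Q_u\ne\emptyset\}$, $j := |J|$, and $C := N_H[S]\cup J$, and argue as follows. First, $a\ge 1$: if $a=0$ then, since~$P$ is connected and distinct cliques~$Q_u$ lie in distinct components of~$G'-V'$, $V(P)$ is contained in a single~$Q_{u_0}$, whence $|N_{G'}(V(P))|\le (M-1)+n = n+k<\ell$. Given $a\ge 1$, a direct count (every selector outside~$P$ is a neighbour since~$V'$ is a clique, and a vertex of~$Q_u$ is a neighbour exactly when it avoids~$P$ and $u\in C$) gives
\[ |N_{G'}(V(P))| \;=\; n - |V(P)| + M\cdot|C|. \]
Since $|V(P)|\le k = M-1$, the hypothesis $|N_{G'}(V(P))|\ge\ell = nM+n-k$ forces $M\cdot|C|>M(n-1)$, i.e.\ $C = V(H)$. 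Then $D := S\cup J$ satisfies $|D|\le a+j\le|V(P)|\le k$ (each~$Q_u$ with $u\in J$ contributes a vertex to~$V(P)$) and dominates~$H$: every $w\in V(H)=C$ is either in~$N_H[S]$, hence dominated by a vertex of $S\subseteq D$, or in $J\subseteq D$, hence dominates itself.

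I expect the converse direction to be the main obstacle, specifically ruling out that a short path inflates its neighbourhood ``cheaply'' by threading through the cliques~$Q_u$ rather than picking the corresponding selectors. Taking $|Q_u| = k+1$ is exactly what makes the counting inequality force $C = V(H)$ (\emph{full} domination) rather than merely partial domination, and the structural point that keeps~$D$ small is that a path vertex lying inside some~$Q_u$ can be ``paid for'' by adding~$u$ itself to the dominating set, so such a detour is never more economical than selecting~$u$. Beyond this, only the degenerate case $a=0$ and the trivial boundary instances ($n\le 1$, or $k\ge n$, or $k=0$) need separate attention, as indicated.
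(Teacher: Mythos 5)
Your reduction is correct, but it takes a genuinely different route from the paper's. The paper reduces from \textsc{Red-Blue Dominating Set}: it adds $k+1$ hub vertices $U$, each carrying $n^2$ pendant leaves and joined completely to the red side, and sets $\ell$ just above $k\cdot n^2$ so that any sufficiently unsecluded path is \emph{forced} to contain all of $U$; with $k'=2k+1$ the path must then alternate between $U$ and $R'$, and the $k$ red vertices it picks up are shown to dominate $B'$. Your construction instead reduces from \textsc{Dominating Set} with $k'=k$, replaces the forcing gadget by per-vertex cliques $Q_u$ of size $M=k+1$ attached to a selector clique, and derives the exact identity $|N_{G'}(V(P))|=n-|V(P)|+M\cdot|C|$, from which $|V(P)|\le k=M-1$ and $|N_{G'}(V(P))|\ge \ell$ immediately force $C=V(H)$. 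This buys you two things the paper's argument does not have: the parameter is preserved exactly ($k'=k$ rather than $2k+1$), and you need no structural claim about the shape of the solution path --- a path that detours into some $Q_u$ is handled by charging $u$ itself to the dominating set via $J$, which is cleaner than arguing that the path must alternate. The paper's approach, conversely, is slightly more "gadget-driven" (pendants force membership of $U$) and yields a solution path with a rigid, easily described structure. Your boundary checks ($a\ge 1$ via connectivity of $P$ and the fact that distinct $Q_u$'s lie in distinct components of $G'-V'$; the trivial cases $n\le 1$, $k=0$, $k\ge n$) are all in order, and the counting in both directions verifies. Both proofs are valid \W{2}-hardness arguments.
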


In the following proof, we consider the \textsc{Red-Blue Dominating Set (RBDS)} problem: 
Given an undirected graph~$G=(V=R\uplus B,E)$ and an integer~$k\in\N$, 
decide whether~$G$ contains a \emph{red-blue $k$-dominating set}, where a red-blue~$k$-dominating set is a subset~$V'\subseteq R$ with~$|V'|\leq k$ such that each vertex in~$B$ is adjacent to at least one vertex in~$V'$.
RBDS parameterized by the solution size~$k$ is a \W{2}-complete problem~\cite{DowneyF13,DowneyFVW99}.

\begin{proof}
 Let $(G=(V=R\uplus B,E),k)$ be an instance of RBDS($k$).
 In polynomial time, we construct the instance $(G',k',\ell)$ of \supAcr{} with~$k'=3k+1$ as follows (see~\cref{fig:w2hard} for an illustration).
 \begin{figure}[t]\centering
  \begin{tikzpicture}[scale=1.2]

      \tikzstyle{xn}=[fill, circle, scale=1/2, draw]
      \def\x{1.25}
      \def\y{1}
      \def\ysh{1.25}

      \node (a1) at (0*\x,0*\y-0.25*\ysh*\y)[xn]{};
      \node (a2) at (1.5*\x,0*\y-0.25*\ysh*\y)[xn,label=135:{$b$}]{};
      \node (a3) at (3*\x,0*\y-0.25*\ysh*\y)[xn,label=45:{$b'$}]{};
      \node (a4) at (4.5*\x,0*\y-0.25*\ysh*\y)[]{$\cdots$};
      \node (a5) at (6*\x,0*\y-0.25*\ysh*\y)[xn]{};

      \node (lbl) at (7*\x,0*\y-0.25*\ysh*\y)[]{$B$};
      \draw[rounded corners, dashed, lightgray] (-0.5*\x,-0.25*\y-0.25*\ysh*\y) rectangle (6.5*\x,0.25*\y-0.25*\ysh*\y);
      
      \def\in{150}
      \def\out{30}

      \def\ysh{1.25}

      \node (b1) at (0.0*\x,0*\y-1*\ysh*\y)[xn]{};
      \node (b2) at (1.5*\x,0*\y-1*\ysh*\y)[xn,label=135:{$r$}]{};
      \node (b3) at (3*\x,0*\y-1*\ysh*\y)[xn,label=45:{$r'$}]{};
      \node (b4) at (4.5*\x,0*\y-1*\ysh*\y)[]{$\cdots$};
      \node (b5) at (6*\x,0*\y-1*\ysh*\y)[xn]{};

      \node (lbl) at (7*\x,0*\y-1*\y*\ysh)[]{$R$};

      \draw (b2) -- (a2) -- (b3) -- (a3);

      \draw[rounded corners, dashed, lightgray] (-0.5*\x,0.25*\y-1*\ysh*\y) rectangle (6.5*\x,-0.25*\y-1*\ysh*\y);

      \tikzstyle{lnode}=[fill,circle,scale=1/5]
      \newcommand{\bstar}[3]{
	
		\def\noleaves{8};
		\def\distleaves{0.6};
	
		\node (#1) at (#2,#3)[xn,draw]{};
 		 \foreach \j in {1,...,\noleaves}{
			   \node (a\j) at ($(#1)+ (202.5+\j*120/\noleaves:\distleaves cm)$)[lnode]{};
				\draw (#1) -- (a\j);
 		 }
	}
      \bstar{c1}{0.75*\x}{0*\y-2*\ysh*\y};
      \bstar{c2}{2.25*\x}{0*\y-2*\ysh*\y};
      \node (c3) at (3.75*\x,0*\y-2*\ysh*\y)[]{$\cdots$};
      \bstar{c4}{5.25*\x}{0*\y-2*\ysh*\y};

      \node (lbl) at (7*\x,0*\y-2*\y*\ysh)[]{$U$};

      \draw[dashed, rounded corners, lightgray] (-0.25*\x,-0.25*\y-2*\ysh*\y) rectangle (6.25*\x,0.25*\y-2*\ysh*\y);

      \foreach \i in {1,...,5}{
	      \foreach \j in {1,...,4}{
		      \draw (b\i) -- (c\j);
	      }
      }

  \end{tikzpicture}
  \caption{Graph~$G'$ in the construction in the proof of~\cref{thm:w2hardness}.}
  \label{fig:w2hard}
 \end{figure}
 
 \constr{}: Let $G'$ be initially a copy of~$G$ (denote by~$R'$ and~$B'$ the copies of~$R$ and~$B$ respectively).
 Next add the vertex set~$U=\{u_1,\ldots,u_{k+1}\}$ to~$G'$.
 Connect every vertex in~$R'$ with every vertex in~$U$ via an edge (i.e.~$R'\cup U$ forms a biclique).
 Finally, for each vertex~$u\in U$, add $n^2$~vertices making each adjacent only to~$u$.
 Denote by~$H$ all the vertices introduced in the previous step.
 Set $k'=2k+1$ and $\ell=k\cdot n^2+2n-k$.
 This finishes the construction.
 
 \corr{}: We prove that $G$ admits a red-blue $k$-dominating set if and only if $G'$ admits a $k'$-short $\ell$-unsecluded path.
 
 \RD{} Let $W\subseteq R$ be a red-blue $k$-dominating set in~$G$ with~$|W|=k$.
 Let $W'=\{w_1',w_2',\ldots,w_k'\}\subseteq R'$ denote the vertices in~$R'$ corresponding to the vertices in $W$.
 We claim that the path $P$ with vertex set~$V(P)=\{u_i\mid 1\leq i\leq k+1\}\cup W'$ and edge set~$E(P)=\{\{u_i,w_i'\},\{w_i',u_{i+1}\}\mid 1\leq i\leq k\}$ is a $k'$-short $\ell$-unsecluded path in~$G'$.
 First observe that the number of vertices in~$P$ is $k'=2k+1$.
 As $W$ is a dominating set in~$G$, $N_{G'}(W')=V''\cup U$.
 Moreover, $N_{G'}(U)=R'\cup H$.
 As $P$ consists exactly of the vertices in~$W'\cup U$, we have $|N_{G'}(W'\cup U)|=|N_{G'}(W')|-|U|+|N_{G'}(U)|-|W'|=n+k\cdot n^2 + (n-k)=\ell$.
 
 \LD{} Let $P$ be a $k'$-short $\ell$-unsecluded path in $G'$.
 The first observation is that~$P$ contains all vertices from~$U$ as $P$ has more than~$k\cdot n^2$ neighbors.
 The second observation is that~$P$ needs to alternate between the vertices in~$V'$ and~$U$ as $P$ only contains $k'=2k+1$ vertices and all vertices of~$U$.
 It follows that $P$ contains exactly $k$ vertices~$W'\subseteq R'$.
 As $V(P)\cup N_G'(V(P))=V(G')$, the vertex set~$W'$ dominates all the vertices in~$B'$.
 It follows that the set~$W\subseteq V$ corresponding to~$W'$ forms a red-blue $k$-dominating set in~$G$.
\end{proof}

We proved that when parameterized by~$k+\ell$, \sspAcr{} (\cref{cor:sspFPTkell}) and~\supAcr{} (\cref{cor:supFPTkell}) are contained in~\FPT{}.
We next prove that unless \unlessPK, none of the two problems admits a problem kernel of polynomial size, even on planar graphs with small maximum degree.

\begin{theorem}
  \label{obs:noPK}
 Unless \unlessPK, ($st$-)\gprobTsc{} parameterized by~$k+\ell$ admits no polynomial problem kernel even on planar graphs with maximum degree seven.
\end{theorem}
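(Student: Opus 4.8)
The plan is to prove kernelization lower bounds via a \emph{composition} (or equivalently, a polynomial-parameter transformation from a problem known to admit no polynomial kernel). Concretely, I would give an OR-composition: from $t$ instances of an NP-hard source problem, build a single instance of the target problem whose parameter is polynomially bounded in $\max_i |x_i| + \log t$ (or just in a fixed polynomial of the largest instance size, independent of $t$), and which is a \yes-instance iff at least one input instance is. The natural source problem here is \textsc{Hamiltonian Path} on planar cubic graphs (used already in~\cref{thm:npcomplete}) — or rather a padded/colored variant so that all $t$ input graphs can be assumed to have the same number $n$ of vertices. Since the reductions in~\cref{thm:npcomplete} already produce instances with $k+\ell\in\Oh(n)$, a composition in which the output parameter stays $\Oh(n)$ suffices.

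The key steps, in order, are: (i) Fix the source problem as \textsc{Planar Cubic Hamiltonian Path} with the guarantee (by padding) that all $t$ instances $G_1,\dots,G_t$ have exactly $n$ vertices, and that each has two designated outerface vertices $a_i,b_i$ of degree at most two playing the role of potential path endpoints (one can always subdivide an outer edge or attach a pendant to create such handles while staying planar and keeping $\Delta$ small). (ii) Build a planar graph $G^\star$ that ``glues'' the $t$ copies in a caterpillar-like fashion: introduce a backbone path (or two global vertices $s,t$ as in the $st$-variant) and attach copy $G_i$ so that any short secluded path through $G^\star$ is forced to lie entirely inside a single $G_i$ (plus a bounded amount of connective gadgetry), exactly mimicking the single-instance reductions of~\cref{thm:npcomplete}. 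The gluing must preserve planarity — here I would exploit that the $a_i,b_i$ sit on the outerface, so the copies can be laid out side by side and connected through the outer region — and must keep $\Delta\le 7$; degree-splitting gadgets (replacing a high-degree hub by a small binary tree of degree-$3$ vertices) handle the latter. (iii) Set $k$ and $\ell$ to the same $\Oh(n)$ values as in~\cref{thm:npcomplete}, verify that $G^\star$ admits a $k$-short $\ell$-secluded (resp.\ $k$-long $\ell$-unsecluded, etc.) path iff some $G_i$ admits a Hamiltonian path, and conclude by the composition framework that no polynomial kernel in $k+\ell$ exists unless $\unlessPK$. For the four variants one reuses the four case distinctions of~\cref{thm:npcomplete}; the $st$-versions follow either directly from the construction with global $s,t$ or via~\cref{lem:manyoneredtost} (noting the parameter blow-up there is polynomial, hence compatible with polynomial-kernel lower bounds).

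The main obstacle is step (ii): engineering the glue so that simultaneously (a) a feasible path cannot ``cheat'' by using vertices of two different copies or by exploiting the backbone (this is where the neighborhood budget $\ell$ must be tightly calibrated — the connective gadgets must contribute a fixed, predictable number of forced neighbors, just as the pendant vertices and the clique $C$ do in the earlier reductions), (b) planarity is preserved across all $t$ copies, and (c) $\Delta\le 7$ throughout, which forces the backbone/hub vertices to be expanded into bounded-degree trees and requires re-checking that the expansion does not create unintended short secluded paths. I expect the ``long/unsecluded'' cases (\lspAcr, \lupAcr) to need the most care, since there the path is forced to be long, so the composition cannot simply isolate one copy by a length bound; instead one makes every ``legal'' long path necessarily traverse exactly one copy completely, using the same pendant-vertex trick that caps the achievable neighborhood of any too-short path below $\ell$. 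Once the single gadget is right, correctness is a routine adaptation of the arguments in~\cref{thm:npcomplete}, and the parameter bound $k+\ell\in\Oh(n)$ is immediate.
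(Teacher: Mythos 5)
Your overall skeleton for \sspAcr{}, \supAcr{}, and \lspAcr{} matches the paper's proof quite closely: an OR-composition in which the $p$ copies are attached to $s$ and $t$ through degree-splitting binary trees (so $\Delta$ stays bounded and planarity with $s,t$ on the outer face is preserved), with connector edges padded (the paper subdivides them $k$ times) and the budgets recalibrated to $k'=3k+2(\log p+1)$ and $\ell'=\ell+2\log p$ so that a feasible path is forced into exactly one copy. Two remarks on that part: the paper composes instances of the \emph{target} problem with itself rather than padded \textsc{Hamiltonian Path} instances, which spares you the (nontrivial) padding step and automatically gives all inputs the same $k,\ell$; and for the non-$st$ variants the paper simply takes the disjoint union of the inputs, since a path lives in one connected component --- no gadgetry is needed there.

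The genuine gap is your treatment of \lupAcr{}. You propose to handle \lspAcr{} and \lupAcr{} together by ``making every legal long path traverse exactly one copy'' via a pendant-vertex/neighborhood-capping trick. That works for \lspAcr{} because the seclusion constraint $N\leq\ell$ is an \emph{upper} bound, so visiting a second copy (or a second root-leaf branch of a splitter tree) can be made to blow the neighborhood budget --- this is exactly the paper's star-attachment argument. But for \lupAcr{} \emph{both} constraints, $|V(P)|\geq k$ and $N\geq\ell$, are lower bounds: a path that snakes through several copies is only longer and only gains neighbors, so no budget can be violated, and the composed instance could be a \yes-instance even when every individual $\I_i$ is a \no-instance. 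There is no way to ``cap'' anything when the objective is monotone in the wrong direction, so this style of OR-composition fails for \lupAcr{}. The paper sidesteps the issue entirely by giving a polynomial parameter transformation from \textsc{Longest $st$-Path} on planar graphs of maximum degree three (setting $\ell=0$), inheriting the known kernel lower bound for that problem. Your proposal as written does not close this case.
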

\noindent

\begin{proof}
 We employ the OR-composition framework~\cite{BodlaenderDFH09}.
 An easy application (taking the disjoint union of the graphs) proves the statement for~\gprobTsc{}.
 Hence, we next consider the $st$-variants.
 Let~$\{\I_i=(G_i,s_i,t_i,k,\ell)\mid 1\leq i\leq p\}$ be a set of~$p$ input instances, where~$p$ is a power of two, and~$G_i$ is planar, is of maximum degree five, and allows for an embedding with~$s,t$ being on the outer face.
 
 ($st$-\sspAcr) We construct the instance~$\I'=(G',s,t,k',\ell')$ as follows.
 Let~$G'$ be initially empty.
 We add two binary trees~$T_s$ and~$T_t$ with root~$s$ and~$t$, respectively, where each tree has~$p$ leaves all being at the same depth.
 Let~$\sigma_1,\ldots,\sigma_p$ denote the leaves of~$T_s$ enumerated through a post-order depth-first search.
 Similarly, let $\tau_1,\ldots,\tau_p$ denote the leaves of~$T_t$ enumerated through a post-order depth-first search.
 Next, for each~$i\in\{1,\ldots,p\}$, add the copy~$G_i'$ of~$G_i$ to~$G'$, and add the edges~$\{\sigma_i,s_i\}$ and~$\{t_i,\tau_i\}$.
 Finally, for each~$i\in\{1,\ldots,p\}$, subdivide the edges~$\{\sigma_i,s_i\}$ and~$\{t_i,\tau_i\}$ each~$k$ times.
 Denote the vertices by~$\sigma_i^1,\ldots,\sigma_i^k$ resulting from the subdivision of~$\{\sigma_i,s_i\}$, enumerated by the distance from~$\sigma_i$, and by~$\tau_i^1,\ldots,\tau_i^k$ resulting from the subdivision of~$\{\tau_i,t_i\}$, enumerated by the distance from~$t_i$.
 For simplicity, we also denote~$\sigma_i$ and~$s_i$ by~$\sigma_i^0$ and~$\sigma_i^{k+1}$, respectively, and $t_i$ and~$\tau_i$ by~$\tau_i^0$ and~$\tau_i^{k+1}$, respectively. 
 This finishes the construction of~$G'$.
 Observe that one can embed both~$T_s$ and~$T_t$ such that when adding the edge set~$\{\{\sigma_i,\tau_i\}\mid 1\leq i\leq p\}$, the resulting graph is crossing-free and~$s$ and~$t$ are on the outer face.
 As each~$G_i$ is planar and allows for an embedding with~$s,t$ being on the outer face, it follows that~$G'$ is planar with~$s$ and~$t$ being on the outer face.
 Moreover, note that~$\Delta(G')\leq 1+\max_{1\leq i\leq p}\Delta(G_i)$.
 Finally, set $k':=3k+2(\log(p)+1)$ and $\ell':=\ell+2\log(p)$.
 We next prove that~$\I'$ is a \yes-instance if and only if there is at least one~$i\in\{1,\ldots,p\}$ such that~$\I_i$ is a \yes-instance.
 
 \LD{}
 Let~$i\in\{1,\ldots,p\}$ such that~$\I_i$ is a \yes-instance, and let~$P$ be a~$k$-short $\ell$-secluded $s_i t_i$-path in~$G$.
 Let~$P'$ denote its copy in~$G_i'$.
 Let~$P_{s,i}$ denote the unique path with endpoints~$s$ and~$\sigma_i$ in~$T_s$.
 Note that~$|V(P_{s,i})|=\log(p)+1$.
 Similarly, let~$P_{t,i}$ denote the unique path with endpoints~$t$ and~$\tau_i$ in~$T_t$.
 Note that~$|N_{T_s}(P_{s,i})|=\log(p)$, as each vertex in~$P_{s,i}$ except~$s$ and~$\sigma_i$ is of degree three in~$T_s$, and~$s$ has one unique neighbor not in~$P_{s,i}$.
 With the same argument, we have $|N_{T_t}(P_{i,t})|=\log(p)$.
 Let~$V_P:=V(P)\cup V(P_{s,i})\cup V(P_{i,t})\bigcup_{j=1}^{k} \{\sigma_i^j,\tau_i^j\}$ and~$E_P:=E(P)\cup E(P_{s,i})\cup E(P_{i,t})\cup \bigcup_{j=0}^{k} \{\{\sigma_i^j,\sigma_i^{j+1}\},\{\tau_i^j,\tau_i^{j+1}\}\}$.
 We claim that the path~$Q=(V_P,E_P)$ is a $k'$-short $\ell'$-secluded $st$-path in~$G'$.
 By construction, ~$Q$ is a~$k'$-short $st$-path in~$G'$.
 Moreover, we have~$|N_{G'}(Q)|=|N_{T_s}(P_{s,i})| + |N_{T_t}(P_{i,t})| + |N_{G_i'}(P')|\leq 2\log(p)+\ell=\ell'$.
 
 \RD{}
 Let~$\I'$ be a \yes-instance, and let~$P$ be a~$k'$-short $\ell'$-secluded $st$-path in~$G'$.
 We claim that there is a subpath~$P'\subseteq P$ such that~$P'$ is a~$k$-short~$\ell$-secluded $s_i t_i$-path in~$G_i$, for some~$i\in\{1,\ldots,p\}$.
 Observe that~$P$ must contain at least one leaf in~$T_s$ and one leaf in~$T_t$.
 Hence,~$|V(P)\cap V(T_s)|\geq \log(p)+1$ and~$|V(P)\cap V(T_s)|\geq \log(p)+1$.
 Moreover, $s_i\in V(P)$ if and only if~$t_i\in V(P)$, as~$P$ has only endpoints~$s$ and~$t$, and~$\{s_i,t_i\}$ separates~$V(G_i)\setminus \{s_i,t_i\}$ from~$V(G')\setminus V(G_i)$.
 Hence, let~$i\in\{1,\ldots,p\}$ such that~$\sigma_i\in V(P)$ (and hence~$\tau_i\in V(P)$).
 Let~$P'$ be the subpath of~$P$ with endpoints~$s_i$ and~$t_i$.
 Clearly,~$V(P')\subseteq V(G_i')$.
 We claim that~$P'$ is a~$k$-short $\ell$-secluded $s_i t_i$-path in~$G_i'$ (and hence, also in~$G_i$).
 First, suppose that~$|V(P')|>k$. 
 Then we have~$|V(P)|\geq |V(P)\cap V(T_s)|+|V(P)\cap V(T_t)|+|V(P')|+2k> 3k+2(\log(p)+1)=k'$, contradicting the fact that $P$ is a~$k'$-short $st$-path in~$G'$.
 Next, we claim that there is no~$j\in\{1,\ldots,p\}\setminus\{i\}$ such that~$s_j\in V(P)$ (and hence,~$t_j\in V(P)$).
 Suppose not. 
 Then~$|V(P)|\geq |V(P)\cap V(T_s)|+|V(P)\cap V(T_t)|+4k> 3k+2(\log(p)+1)=k'$, again contradicting the fact that $P$ is a~$k'$-short $st$-path in~$G'$.
 It follows that~$T_s[V(P)\cap V(T_s)]$ is the unique path in~$T_s$ with endpoints~$s$ and~$\sigma_i$, and~$T_t[V(P)\cap V(T_t)]$ is the unique path in~$T_t$ with endpoints~$t$ and~$\tau_i$.
 Moreover, $|N_{T_s}(V(P))|=|N_{T_t}(V(P))|=\log(p)$.
 Finally, suppose that~$|N_{G_i'}(V(P'))|>\ell$.
 Then we have~$|N_{G'}(V(P))| = |N_{T_s}(V(P))|+|N_{T_t}(V(P))|+|N_{G_i'}(V(P'))|> \ell+2\log(p)=\ell'$, contradicting the fact that $P$ is an~$\ell'$-secluded $st$-path in~$G'$.
 We conclude that $P'$ is a~$k$-short $\ell$-secluded $s_i t_i$-path in~$G_i$, and hence,~$\I_i$ is a \yes-instance.
 
 ($st$-\supAcr)
 The construction is exactly the same as for~$st$-\sspAcr.
 The crucial observation is, again, that every~$k'$-short $\ell'$-unsecluded $st$-path~$P$ in~$G'$ only contains~$s_i$ (and~$t_i$) for exactly one~$i\in\{1,\ldots,p\}$.
 
 ($st$-\lspAcr{})
 Let~$\I'$ as in the construction for~$st$-\sspAcr.
 Make each vertex of the binary trees a star with~$2\log(p)+\ell+1$ leaves, and denote by~$G''$ the graph obtained from~$G'$ in this step.
 Set~$\ell'':=2(\log(p)+1)\cdot(2\log(p)+\ell+1)+\ell+2\log(p)$.
 This forces every~$k'$-long $\ell''$-secluded~$st$-path~$P$ in~$G'$ to only contain~$\log(p)$ vertices in each of the binary trees, as otherwise such a path~$P$ would contain at least~$2(\log(p)+1)\cdot(2\log(p)+\ell+1)+(2\log(p)+\ell+1)>\ell'$ neighbors.
 
 ($st$-\lupAcr{})
 There is an straight-forward polynomial parameter transformation from \textsc{Longest $st$-Path} on planar graphs with maximum degree three~\cite{BodlaenderDFH09}.
 Note that herein, we set~$\ell=0$.
\end{proof}

\section{Conclusion and Outlook}
\label{sec:concl}

All four variants remain \NP-complete in planar graphs with small vertex degree.
However, the ``short'' and ``long'' variants are distinguishable through their parameterized complexity regarding~$k$.
We conjecture that all four variants are pairwise distinguishable through the parameterized complexity regarding the parameters~$k$,~$\ell$, and~$k+\ell$.
To resolve this conjecture, the parameterized complexity of~\supAcr{} parameterized by~$\ell$ and~\lupAcr{} parameterized by~$k+\ell$, that we left open, has to be settled.

As a further research direction, it is interesting to investigate the problem of finding small/large secluded/unsecluded (sub-)graphs different to paths.
For instance, the class of trees could be an interesting next candidate in this context.
Notably, herein the large secluded variant is polynomial-time solvable.

\bibliographystyle{abbrvnat}
\bibliography{cnp-journal}

\newcommand{\noopsort}[1]{}
\begin{thebibliography}{13}
\providecommand{\natexlab}[1]{#1}
\providecommand{\url}[1]{\texttt{#1}}
\expandafter\ifx\csname urlstyle\endcsname\relax
  \providecommand{\doi}[1]{doi: #1}\else
  \providecommand{\doi}{doi: \begingroup \urlstyle{rm}\Url}\fi

\bibitem[Bevern et~al.(2018{\natexlab{a}})Bevern, Fluschnik, Mertzios, Molter,
  Sorge, and Such{\'{y}}]{BevernFMMSS18}
R.~{\noopsort{Bevern}van}. Bevern, T.~Fluschnik, G.~B. Mertzios, H.~Molter,
  M.~Sorge, and O.~Such{\'{y}}.
\newblock The parameterized complexity of finding secluded solutions to some
  classical optimization problems on graphs.
\newblock \emph{Discrete Optim.}, 30:\penalty0 20--50, 2018{\natexlab{a}}.

\bibitem[Bevern et~al.(2018{\natexlab{b}})Bevern, Fluschnik, and
  Tsidulko]{BevernFT18}
R.~{\noopsort{Bevern}van}. Bevern, T.~Fluschnik, and O.~Y. Tsidulko.
\newblock Parameterized algorithms and data reduction for safe convoy routing.
\newblock In \emph{Proc.\ of 18th {ATMOS}}, volume~65 of \emph{{OASICS}}, pages
  10:1--10:19. Schloss Dagstuhl - Leibniz-Zentrum fuer Informatik,
  2018{\natexlab{b}}.

\bibitem[Bodlaender et~al.(2009)Bodlaender, Downey, Fellows, and
  Hermelin]{BodlaenderDFH09}
H.~L. Bodlaender, R.~G. Downey, M.~R. Fellows, and D.~Hermelin.
\newblock On problems without polynomial kernels.
\newblock \emph{J. Comput. Syst. Sci.}, 75\penalty0 (8):\penalty0 423--434,
  2009.

\bibitem[Chechik et~al.(2017)Chechik, Johnson, Parter, and Peleg]{ChechikJPP17}
S.~Chechik, M.~P. Johnson, M.~Parter, and D.~Peleg.
\newblock Secluded connectivity problems.
\newblock \emph{Algorithmica}, 79\penalty0 (3):\penalty0 708--741, 2017.

\bibitem[Cygan et~al.(2015)Cygan, Fomin, Kowalik, Lokshtanov, Marx, Pilipczuk,
  Pilipczuk, and Saurabh]{CyganFKLMPPS15}
M.~Cygan, F.~V. Fomin, L.~Kowalik, D.~Lokshtanov, D.~Marx, M.~Pilipczuk,
  M.~Pilipczuk, and S.~Saurabh.
\newblock \emph{Parameterized Algorithms}.
\newblock Springer, 2015.

\bibitem[Diestel(2010)]{Diestel10}
R.~Diestel.
\newblock \emph{Graph Theory}, volume 173 of \emph{Graduate Texts in
  Mathematics}.
\newblock Springer, 4th edition, 2010.

\bibitem[Downey and Fellows(2013)]{DowneyF13}
R.~G. Downey and M.~R. Fellows.
\newblock \emph{Fundamentals of Parameterized Complexity}.
\newblock Texts in Computer Science. Springer, 2013.

\bibitem[Downey et~al.(1999)Downey, Fellows, Vardy, and Whittle]{DowneyFVW99}
R.~G. Downey, M.~R. Fellows, A.~Vardy, and G.~Whittle.
\newblock The parametrized complexity of some fundamental problems in coding
  theory.
\newblock \emph{{SIAM} J. Comput.}, 29\penalty0 (2):\penalty0 545--570, 1999.

\bibitem[Fomin et~al.(2017)Fomin, Golovach, Karpov, and Kulikov]{FominGKK17}
F.~V. Fomin, P.~A. Golovach, N.~Karpov, and A.~S. Kulikov.
\newblock Parameterized complexity of secluded connectivity problems.
\newblock \emph{Theory Comput. Syst.}, 61\penalty0 (3):\penalty0 795--819,
  2017.

\bibitem[Garey et~al.(1976)Garey, Johnson, and Tarjan]{GareyJT76}
M.~R. Garey, D.~S. Johnson, and R.~E. Tarjan.
\newblock The planar {H}amiltonian circuit problem is {NP}-complete.
\newblock \emph{{SIAM} J. Comput.}, 5\penalty0 (4):\penalty0 704--714, 1976.

\bibitem[Golovach et~al.(2017)Golovach, Heggernes, Lima, and
  Montealegre]{GolovachHLM17}
P.~A. Golovach, P.~Heggernes, P.~T. Lima, and P.~Montealegre.
\newblock Finding connected secluded subgraphs.
\newblock In \emph{Proc.\ of 12th {IPEC}}, volume~89 of \emph{LIPIcs}, pages
  18:1--18:13. Schloss Dagstuhl - Leibniz-Zentrum fuer Informatik, 2017.

\bibitem[Impagliazzo et~al.(2001)Impagliazzo, Paturi, and
  Zane]{ImpagliazzoPZ01}
R.~Impagliazzo, R.~Paturi, and F.~Zane.
\newblock Which problems have strongly exponential complexity?
\newblock \emph{J. Comput. Syst. Sci.}, 63\penalty0 (4):\penalty0 512--530,
  2001.

\bibitem[Luckow(2017)]{Luckow17}
M.-J. Luckow.
\newblock Paths under neighborhood constraints---algorithms and complexity, May
  2017.
\newblock URL \url{http://fpt.akt.tu-berlin.de/publications/theses/BA-max-
  luckow.pdf}.
\newblock Bachelor thesis. TU Berlin, Berlin, Germany.

\end{thebibliography}

\end{document}